\newtheorem{lemma}{Lemma}
\newtheorem{remark}{Remark}
\newtheorem{assumption}{Assumption}
\newtheorem{proposition}{Proposition}
\newtheorem{theorem}{Theorem}
\newtheorem{corollary}{Corollary}
\title{Estimating High-Dimensional Discrete Choice Model of Differentiated Products with Random Coefficients}
\author{
  Masayuki Sawada\\
  Institute of Economic Research,\\
  Hitotsubashi University\\
  \texttt{m-sawada@ier.hit-u.ac.jp} \\
   \And
  Kohei Kawaguchi \\
  Department of Economics, \\
  Hong Kong University of Science and Technology\\
  \texttt{kkawaguchi@ust.hk} \\
}
\begin{document}
\maketitle
\begin{abstract}
We propose an estimation procedure for discrete choice models of differentiated products with possibly high-dimensional product attributes. In our model, high-dimensional attributes can be determinants of both mean and variance of the indirect utility of a product. The key restriction in our model is that the high-dimensional attributes affect the variance of indirect utilities only through finitely many indices. In a framework of the random-coefficients logit model, we show a bound on the error rate of a $l_1$-regularized minimum distance estimator and prove the asymptotic linearity of the de-biased estimator.
\end{abstract}


\section{Introduction}

There are many occasions in which high-dimensional product attributes are available for estimating and predicting demands for differentiated products, especially in context where machine learning techniques such as pattern recognition and natural language processing can generate high-dimensional representation of product attributes. For example, we can consider a consumer choice over clothes. In addition to the typical characteristics of prices, countries of origin, and materials, there are numerous varieties of design patterns that can influence a consumer's choice. The choice over books is dependent on their contents. With a natural language processing technique, we can potentially represent their contents with a high-dimensional semantic vectors. In this paper, we investigate a framework that enables us to integrate these potentially informative but not-yet-fully-used product attributes in estimation, inference, and prediction of demands for differentiated products.

We are not the first considering the estimation of high-dimensional discrete choice model. \cite{gillenBLP2LASSOAggregateDiscrete2019} studied an estimation of a model that extends \citet[henceforth BLP]{berryAutomobilePricesMarket1995}'s random-coefficients discrete-choice model with high-dimensional attributes and applied to the study of political campaign for elections. In their model, the product attributes can grow exponentially relative to the sample size; however, random coefficient is only allowed for the price. Under this assumption, the high-dimensional attributes can only affect the mean indirect utility. Thus, they can apply a $l_1$-regularized least squares method to the inverted mean indirect utility to select relevant variables.

Recent methodological developments in statistics and econometrics allows us to apply a high-dimensional estimation procedure to a broader class of models. In particular, the recent work of \cite{belloniHighdimensionalEconometricsRegularized2018} provides a set of useful results for $l_1$-regularized minimum distance estimation following the techniques developed by \cite{frankStatisticalViewChemometrics1993} and \cite{tibshiraniRegressionShrinkageSelection1996}. We derive the property of our $l_1$-regularaized minimum distance estimator and its debiased version based on their results. Because the objective function of the BLP model is non-linear in the parameters, we use the contraction inequality theorem of \cite{ledouxProbabilityBanachSpaces1991} to control the tail probability of the estimation error. This contraction inequality exploits the Lipschitz continuity of the objective function with respect to a single index. In the context of the BLP discrete choice model with non-linear but smooth moment conditions, we show that the analogue principle can be applied to the case with multiple indices. This allows us to have fixed number of random coefficients on the indices of potentially high-dimensional attributes. 

The rest of the paper proceeds in the following manner. In the next section, we describe our model and introduce the regularized GMM (RGMM) problem for our model. In section 3, we show the probability bound for the estimation error from the regularized GMM problem. In section 4, we consider the de-biased procedure for the proper inference. The last section concludes.

\section{Demand estimation as a regularlized GMM}
\subsection{Model}

Consider there are $J$ products in each market $i \in \{1,\ldots,n\}$. Let us denote $[\cdot]$ for an integer indicates index set $\{1,\ldots, \cdot\}$.
Each product $j$ shares a non-zero demand $S_{ij} \in (0,1)$ in each market $i$. Each product $j$ in a market $i$ has observed $L$ attributes $\{x_{ijl}\}_{l \in [L]}$ including cost of attaining the product $j$, $-p_{ij}$, and an unobserved attribute $\xi_{ij}$. 

In this paper, we consider the high-dimensionality in the product attributes $x_{ijl}$. 
The key restriction is that we assume there are $G$ known finite partitions of the high-dimensional product characteristics $[L]$. In particular, we consider the following indirect utility for a product $j$ in a market $i$:
\[
 u_{ij} = \sum_{g \in [G]} \sum_{l \in L_g} x_{ijl}(\beta_{l} + \gamma_l \tilde{\beta}_{g}) + \xi_{ij} + \epsilon_{ij}
\]
where $\epsilon_{ij}$ is an idiosyncratic error term and $L_g$ is mutually exclusive subset of $[L]$ for each $g \in [G]$ such that $\cup_{g \in [G]} [L_g] = [L]$. Note that $\sum_{g \in [G]} \sum_{l \in L_g} x_{ijl}\beta_{l} + \xi_{ij} = x_{ij}'\beta + \xi_{ij}$ with $\beta = (\beta_1,\ldots,\beta_L)'$ captures the mean utility of the product $j$ in the market $i$. Also, $\sum_{g \in [G]} \sum_{l \in L_g} x_{ijl}\gamma_{l} \tilde{\beta}_{g} = \sum_{g \in [G]} x_{ijg}'\gamma_g \tilde{\beta}_{g}$ with $x_{ijg} = [x_{ijl}]_{l \in L_g}'$ and  $\gamma_g = [\gamma_l]_{l \in L_g}'$ and $\tilde{\beta}_{g} \sim_{iid} N(0,1)$ capture the individual heterogeneity in the utility for the product $j$ in the market $i$. We assume $\tilde{\beta}_{g}$ is independent of $\tilde{\beta}_{g'}$, and the group specific variance of the random coefficient $\tilde{\beta}_g$ is normalized to 1 for each corresponding vector $\gamma_g$. Therefore, the variance of the indirect utility of the product $j$ in the market $i$ is $\sum_{g \in [G]} (\sum_{l \in [L_g]} x_{ijl} \gamma_l)^2$. This finite indices restriction allows us to apply the contraction inequality theorem, \cite{ledouxProbabilityBanachSpaces1991}, which is in principle applied to a Lipschitz transformation of a single index. If the random coefficients are not restricted, then the number of indices grows to infinity as the number of attributes grows infinity, and we cannot apply the contraction inequality to bound the estimation error. For the reminder of the paper, let $g(l)$ represents the group $g \in [G]$ that the attribute $l$ belongs to.

The mean utility is the same as usual differentiated product demand model such as the BLP model. The heterogeneity term is different from the usual random coefficient model. This model can be seen as a special case of the usual random coefficient model that the product attributes $x_{ijl}$ and $x_{ijl'}$ share the same individual preference shock $\beta_{g}$ if $l, l' \in L_g$. In other words, consumers observe the set of characteristics $\{x_{ijl}\}_{l \in L_g}$ as a common index characteristics of $x_{ijg}'\gamma_g$, but individuals may have different preference over the index $x_{ijg}'\gamma_g$. In the example of the design pattern, we share the same objective descriptions of the design, but the subjective preference over the descriptions differ across people. 
An important feature is that we do not restrict $\beta_l = \gamma_l$. Therefore, the mean $\sum_{l \in L_g} x_{ijl}\beta_l$ and the variance $(\sum_{l \in L_g} x_{ijl}\gamma_l)^2$ of each utility load from a group of characteristics $L_g$ may be unrelated each other.

\subsection{Moment condition}

Suppose that $\epsilon_{ij}$ follows iid Type-I extreme value distribution, then the parameters $\theta \equiv \{\beta',\gamma'\}'$ and observed characteristics $x_{ij}$ pin down the share of the product $j$ in the market $i$ as the function of mean utility vector $x_{i}'\beta + \xi_i$ such that
\[
 s_j(x_{i},x_{i}'\beta + \xi_{i};\theta) = \int_{\tilde{\beta}} \frac{\exp(x_{ij}'\beta + \xi_{ij} + \sum_g x_{ijg}'\gamma_g \tilde{\beta}_{g})}{1 + \sum_{j' \in [J]} \exp(x_{ij'}'\beta + \xi_{ij'} + \sum_g x_{ij'g}'\gamma_g\tilde{\beta}_{g})} dF_{\tilde{\beta}} = S_{ij}
\]
where $F \sim N_{[G]}(0,I_{[G]})$ and $S_{ij}$ denotes the population share of the product $j$ in the market $i$. \footnote{To simplify the argument, we ignore the measurement error issue of the observed market share from the population share for now.}

Here we assume that the unobserved product type $\xi_{ij}$ is mean independent of a vector of instruments $w_{ij}$,
\[
 E[\xi_{ij}|w_{ij}] = 0
\]
where the expectation is taken over the markets $i$ for each $j \in [J]$. This conditional moment restriction leads to a set of unconditional moment conditions with transformed vector of $K$ instruments 
for each product $j$, $h_{jk}(w_{ij}), k \in [K]$, such that
\[
 E[\xi_{ij} h_{jk}(w_{ij})] = 0
\]
for each $j \in [J]$. 
\cite{berryEstimatingDiscreteChoiceModels1994} shows that there exists unique inverse functions of the share $s_j(x,\cdot;\theta)$, such that 
\begin{equation}
 s_j(x_i,s^{-1}(x_i,S_i;\theta);\theta) = S_{ij}. \label{eq.share}
\end{equation}
Therefore, the moment condition is now
\[
 E[(s_j^{-1}(x_{i},S_i;\theta) - x_{i}'\beta)h_{jk}(w_{ij})] = 0, \forall j \in [J], k \in [K].
\]
Now let 
\[
 \xi_{j}(\tilde{x};\theta) \equiv s_j^{-1}(x,S;\theta) - x_{ij}'\beta
\]
where $\tilde{x} \equiv (x',S',w')'$.

\subsection{Regularized GMM problem}
Following \cite{belloniHighdimensionalEconometricsRegularized2018}, we consider a regularized GMM approach for the moment condition above. In particular, let $f(\tilde{x};\theta)$ be the score function vector of $(J,K)$ elements with
\[
 f_{jk}(\tilde{X};\theta) \equiv \xi_{j}(\tilde{X};\theta) h_{jk}(W),
\]
for each $j \in [J]$ and $k \in [K]$ where $\tilde{X} \equiv (X',S',W)'$, and let
\[
 f(\theta) \equiv A E[f(\tilde{X};\theta)]
\]
and
\[
 \hat{f}(\theta) \equiv \hat{A} \mathbb{E}_n [f(\tilde{X};\theta)]
\]
with some weight matrix $A$ and its estimate $\hat{A}$, where $\mathbb{E}_n[\cdot]$ represents sample mean of a random vector. For now, let $A = \hat{A} = I$.

The regularized GMM estimator $\hat{\theta}$ solves the following optimization problem
\[
 \min_{\theta \in \Theta} \|\theta\|_1 : \|\hat{f}(\theta)\|_{\infty} \leq \lambda
\]
for some regularization parameter $\lambda$.

\section{Bounds on the estimation error}

\cite{belloniHighdimensionalEconometricsRegularized2018} show the rate of convergence for the estimation error under two conditions in addition to the regularization condition which is the constraint of the optimization problem shown above. Below we cite their statement under three high-level conditions
\begin{proposition}[Proposition 3.1 of \cite{belloniHighdimensionalEconometricsRegularized2018}]
 Assume the following three conditions:
 \begin{enumerate}
     \item (Regularization) The regularization parameter $\lambda$ satisfies
      \[
      \|\hat{f}(\theta_0)\|_{\infty} \leq \lambda
     \]
     with probability at least $1 - \alpha$
     \item (Identifiability) The population moment function satisfies the following:
     \[
      \left\{\|f(\theta) - f(\theta_0)\|_{\infty} \leq \epsilon, \theta \in \mathcal{R}(\theta_0) \right\}
     \]
     implies
     \[
      \|\theta - \theta_0\|_l \leq r(\epsilon;\theta_0,l)
     \]
     for all $\epsilon > 0$ where $\mathcal{R}(\theta_0) \equiv \{\theta \in \Theta: \|\theta\|_1 \leq \|\theta_0\|_1\}$, and $r(\cdot;\theta_0,l)$ is a weakly increasing rate function depending on the semi-norm $l$.
     \item (Empirical moment restriction) The empirical moment function satisfies
     \[
      \sup_{\theta \in \mathcal{R}(\theta_0)} \|\hat{f}(\theta) - f(\theta)\|_{\infty} \leq \epsilon_n
     \]
     with probability at least $1 - \delta_n$.
 \end{enumerate}
Then with probability at least $1 - \alpha - \delta_n$,
\[
 \|\hat{\theta} - \theta_0\|_l \leq r(\lambda + \epsilon_n;\theta_0,l).
\]
\end{proposition}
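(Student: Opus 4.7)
The plan is to work on the intersection of the events in conditions (1) and (3), which by a union bound has probability at least $1-\alpha-\delta_n$, and to chain the three assumptions together via the triangle inequality. The only genuinely substantive step is the observation that the population moment function vanishes at $\theta_0$, i.e.\ $f(\theta_0)=A\,E[f(\tilde X;\theta_0)]=0$, since $E[\xi_j(\tilde X;\theta_0)h_{jk}(W)]=0$ by construction; once this is in hand, everything else is bookkeeping.

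First I would show that on the event in (1), $\theta_0$ itself is feasible for the constrained $\ell_1$ program because $\|\hat f(\theta_0)\|_\infty\le\lambda$. Since $\hat\theta$ is an $\ell_1$-minimizer subject to the same constraint, $\|\hat\theta\|_1\le\|\theta_0\|_1$, so $\hat\theta\in\mathcal R(\theta_0)$. This is the critical structural consequence of the optimization: it lets me invoke conditions (2) and (3) at $\hat\theta$.

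Next, still on the intersection event, I would bound
\begin{align*}
\|f(\hat\theta)-f(\theta_0)\|_\infty
&= \|f(\hat\theta)\|_\infty\\
&\le \|f(\hat\theta)-\hat f(\hat\theta)\|_\infty + \|\hat f(\hat\theta)\|_\infty\\
&\le \epsilon_n + \lambda,
\end{align*}
where the first inequality uses $f(\theta_0)=0$, the bound on the first term uses condition (3) together with $\hat\theta\in\mathcal R(\theta_0)$, and the bound on the second term uses the feasibility of $\hat\theta$ for the program. Applying identifiability (condition (2)) with $\epsilon=\lambda+\epsilon_n$ and $\theta=\hat\theta$ then yields $\|\hat\theta-\theta_0\|_l\le r(\lambda+\epsilon_n;\theta_0,l)$, as required.

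The proof is not really obstructed at any point, because the stochastic difficulties—controlling $\|\hat f(\theta_0)\|_\infty$ and the uniform deviation $\sup_{\mathcal R(\theta_0)}\|\hat f-f\|_\infty$—have been absorbed into the hypotheses. The one conceptual subtlety worth emphasizing is the interlock: regularization guarantees that $\theta_0$ lies in the feasible set, which forces $\hat\theta\in\mathcal R(\theta_0)$, which in turn licenses the use of (2) and (3) uniformly over that restricted set. The genuinely hard work, deferred to the later sections, will be verifying (1) and (3) for the BLP-type non-linear moments via the Ledoux–Talagrand contraction inequality applied to the finitely many indices $x_{ijg}'\gamma_g$.
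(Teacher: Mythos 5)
Your proof is correct and is essentially the standard argument behind the cited result: feasibility of $\theta_0$ under the regularization event forces $\hat\theta\in\mathcal R(\theta_0)$, the triangle inequality together with $f(\theta_0)=0$ gives $\|f(\hat\theta)-f(\theta_0)\|_\infty\le\lambda+\epsilon_n$, and identifiability plus a union bound finishes. The paper itself states this proposition by citation without reproducing a proof, so there is nothing further to compare against.
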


\subsection{Identifiability condition}
For the second condition of the identifiability, \cite{belloniHighdimensionalEconometricsRegularized2018} offers the following sufficient condition
\begin{assumption}[Condition NLID for exactly sparse parameters] \label{ass.nlid}
 Suppose that there exist $T \subset [L]$ with cardinality $s$ such that $\theta_{l0} \neq 0$ only for $l \in T$.

 For each $q \in \{1,2\}$, suppose that there exists a sequence $\mu_n$ such that
 \[
  k(\theta_0,l_q) \equiv \inf_{\theta \in \mathcal{R}(\theta_0):\|\theta - \theta_0\|_q > 0}\|G(\theta - \theta_0)\|_{\infty}/\|\theta - \theta_0\|_q \geq s^{-1/q}\mu_n
 \]
 where $G$ is the Jacobian matrix of $f(\theta)$.
 
 Suppose further that
 \[
  \{\|f(\theta) - f(\theta_0)\|_{\infty} \leq \epsilon, \theta \in \mathcal{R}(\theta_0)\}
 \]
 implies that 
 \[
  \|G(\theta - \theta_0)\|_{\infty}/2 \leq \epsilon
 \]
 for all $\epsilon \leq \epsilon^*$ for some $\epsilon^*$.
\end{assumption}

The last condition in the above assumption \ref{ass.nlid} is specific to the non-linear problem. Nevertheless, this assumption does not bind in our model because our target moment function is continuously differentiable everywhere.

The second condition in assumption \ref{ass.nlid} regulates the modulus of continuity $k(\theta_0,l)$. Lemma 3.1 of Belloni et al. (2018) offers a sufficient condition for the second condition in exactly sparse model. For the linear IV regression model, for any sub-vector of covariates $X$, we need some sub-vector of instruments $W$ such that $E[W'X]$ is non-singular. In other words, there exists some instruments that are strong for any sub-vector of endogenous covariates. In our context of the BLP model, the Jacobian matrix is $JK \times 2L$ matrix with each $l$ entry for $jk$ element as
\[
    G_{jk,l}(\tilde{X},\theta) =
    \begin{cases}
     E[h_{jk}(W)X_{jl}] & \mbox{ for } 1 \leq l \leq L\\
     E[h_{jk}(W)D_j(\theta) \int \tilde{\beta}_{g(l)} s(\tilde{\beta};\tilde{X},\theta) \sum_{j'=1}^{J} (1 - s_{j'}(\tilde{\beta};\tilde{X},\theta))X_{j'l}dF_{\tilde{\beta}}] & \mbox{ for } L+1 \leq l \leq 2L
    \end{cases}
\]
where $s(\tilde{\beta};\tilde{X},\theta)$ is $J\times 1$ vector of $s_j(\tilde{\beta};\tilde{X},\theta) \equiv \frac{\exp(X'_j \beta + \xi_j(\tilde{X};\theta) + \sum_g X_{jg}'\gamma_g \tilde{\beta}_{g})}{1 + \sum_{j' \in [J]} \exp(X'_{j'} \beta + \xi_{j'}(\tilde{X};\theta) + \sum_g X_{j'g}'\gamma_g\tilde{\beta}_{g})}$ and $D_{j}(\theta)$ is the $j$-th row of the inverse matrix of the Jacobian matrix of $s(\tilde{X};\theta) \equiv s(X,X'\beta + \xi(\tilde{X};\theta);\theta)$ vector with respect to the mean utility vector. Therefore, the modulus of continuity condition requires that the variables $h_{jk}(W)$ serve as the strong instruments for the attribute $l$ of the product $j$, $X_{jl}$, as well as the weighted sum of the attributes $l$ of the products $j'$ across the market $X_{j'l}$. This is not a strong restriction for the most of the attributes as we often assume the attributes are exogenous. For the endogenous attributes, we need to be cautious on the restriction as the instruments are not necessarily strong in particular when the asymptotic is considered for the size of markets $J$ rather than the number of markets $n$. See \cite{armstrongLargeMarketAsymptotics2016} for the relevant discussion.

\begin{lemma}[Lemma 3.4 of \cite{belloniHighdimensionalEconometricsRegularized2018}]
 Under assumption \ref{ass.nlid}, for all $0 < \epsilon \leq \epsilon^*$
 \[
  \left\{\|f(\theta) - f(\theta_0)\|_{\infty} \leq \epsilon, \theta \in \mathcal{R}(\theta_0) \right\}
 \]
 implies
 \[
  \|\theta - \theta_0\|_l \leq r(\epsilon;\theta_0,l) \leq 2\epsilon s^{1/q} \mu_n^{-1}.
 \]
\end{lemma}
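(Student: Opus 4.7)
The plan is to chain the two consequences of Assumption \ref{ass.nlid} in sequence; the lemma is essentially a translation of that assumption into the rate-function form required by the identifiability hypothesis of Proposition 1.

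First, I would apply the last part of Assumption \ref{ass.nlid} directly: given the hypothesis that $\epsilon \leq \epsilon^*$, $\|f(\theta) - f(\theta_0)\|_\infty \leq \epsilon$, and $\theta \in \mathcal{R}(\theta_0)$, that condition yields $\|G(\theta - \theta_0)\|_\infty / 2 \leq \epsilon$, i.e., $\|G(\theta - \theta_0)\|_\infty \leq 2\epsilon$. This is the key reduction: it converts the non-linear moment discrepancy into a linear one, expressed through the Jacobian, so that the remainder of the argument can proceed exactly as in the linear IV case.

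Second, I would invoke the modulus-of-continuity bound. For $\theta \neq \theta_0$ with $\theta \in \mathcal{R}(\theta_0)$, the infimum definition of $k(\theta_0,l_q)$ gives
\[
\|\theta - \theta_0\|_q \;\leq\; \frac{\|G(\theta - \theta_0)\|_\infty}{k(\theta_0,l_q)} \;\leq\; \frac{2\epsilon}{s^{-1/q}\mu_n} \;=\; 2\epsilon\, s^{1/q}\mu_n^{-1},
\]
where the first inequality uses the definition and the second combines the first-step bound with the assumed lower bound $k(\theta_0,l_q) \geq s^{-1/q}\mu_n$. The case $\theta = \theta_0$ is trivially bounded by zero, so no separate argument is needed. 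Taking $r(\epsilon;\theta_0,l) = 2\epsilon s^{1/q}\mu_n^{-1}$ yields a weakly increasing rate function as required.

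There is no substantial obstacle: the proof is a two-line chain once Assumption \ref{ass.nlid} is in place. The only bookkeeping subtlety is the mild notational overloading between the generic semi-norm $l$ in the identifiability condition of Proposition 1 and the specific $l_q$ semi-norms with $q \in \{1,2\}$ used in Assumption \ref{ass.nlid}; one should state explicitly that the lemma is applied with $l = l_q$ so that the exponents match. The real work in the overall strategy is in verifying Assumption \ref{ass.nlid} — in particular, producing the lower bound on $k(\theta_0,l_q)$ via the structure of the BLP Jacobian $G$ displayed earlier — but that verification is not part of the current lemma's statement.
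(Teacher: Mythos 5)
Your proof is correct and is essentially the argument behind the cited result: the paper itself states this lemma without proof (deferring to Belloni et al.), and the intended derivation is exactly your two-step chain — use the last part of Assumption \ref{ass.nlid} to get $\|G(\theta-\theta_0)\|_\infty \leq 2\epsilon$, then divide by the modulus of continuity $k(\theta_0,l_q) \geq s^{-1/q}\mu_n$. Nothing is missing.
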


\subsection{Tail probability bound}

Now, let $\Delta \theta \equiv \theta - \theta_0$ for any $\theta, \theta_0 \in \Theta$, where $\Theta$ is a subset of $\mathbb{R}^{2L}$ defined such that $\Delta \theta \in \Theta$.

Next, we consider the bound for the tail-probability of the estimation error process:
\[
 \sup_{\Delta \theta \in \Theta, j \in [J], k \in [K]} \left| \mathbb{G}_n (f_{jk}(\tilde{X};\theta_0 + \Delta \theta) - f_{jk}(\tilde{X};\theta_0)) \right|
\]
where $\mathbb{G}_n X$ is the empirical process of the sequence $\{X_i\}_{i \in [n]}$.

For the rest of the discussion, we introduce the following indices. For each $g \in \{0,1,\ldots, G\}$ and $j \in [J]$, let $X_{j,g}$ be a sub-vector of $X_j$ corresponding $g$-th partition of $[L]$ with $X_{j,0} \equiv X_j$, and let $\theta_g$ and $\nu_{jg}$ be defined as
 \[
  \nu_{jg}\equiv X_{j,g}\theta_{g} \equiv
  \begin{cases}
    X_{j}'\beta & \mbox{ if } g = 0\\
    X_{j,g}'\gamma_{g} & \mbox{ if } g > 0
  \end{cases}
 \]
 and
  \[
  \xi_{j}(\nu;\tilde{X}) \equiv \xi(\tilde{X};\theta)
 \]
 such that $\nu_{j'g} = X_{j',g}\theta_g$ for every $j' \in [J]$ and $g \in \{0,1,\ldots, G\}$.
 
 Also, let
 \[
 \nu_{j'g0}\equiv X_{j',g}\theta_{g0} \equiv
  \begin{cases}
    X_{j'}\beta_0 & \mbox{ if } g = 0\\
    X_{j',g}\gamma_{g0} & \mbox{ if } g > 0.
  \end{cases}
 \]
 In lemma \ref{lemma.Lip} in the Appendix, we show that the score functions $f_{jk}$ are Lipschitz continuous in $\nu_{j'g}$ uniformly for every $\nu_{-j',-g}$ with the Lipschitz constant $J$ times some universal constant. Using this property, we employ the Ledaux-Talagrand contraction inequality as follows:
\begin{theorem} \label{thm.tailbound}
 In addition to the assumptions for lemma \ref{lemma.Lip} in the Appendix, suppose the following
 \begin{enumerate}
     \item 
        $\sup_{\Delta \theta \in \Theta, j \in [J],k \in [K]} \mathbb{E}_n Var(f_{jk}(\tilde{X},\theta_0 + \Delta \theta) - f_{jk}(\tilde{X},\theta_0)) \leq B_{1n}^2$, and
    \item $\max_{j \in [J], l \in [L], k \in [K]} \mathbb{E}_n (X^2_{jl} h^2_{jk}(W)) \leq B_{2n}^2$, and
        $\|n^{-1/2} \mathbb{G}_n(f(\tilde{X},\theta_0))\|_{\infty} \leq n^{-1/2} l_n$ with probability at least $1 - \delta_n/6$
 \end{enumerate}
 then,
 \[
  \sup_{\theta \in \mathcal{R}(\theta_0)} \|\hat{f}(\theta) - f(\theta)\|_{\infty} \leq n^{-1/2} (\tilde{l}_n + l_n)
 \]
 with probability at least $1 - \delta_n$, where
 \[
  \tilde{l}_n \equiv C(B_{1n} + (J^2G)(2\sqrt{2}B_{2n}\sup_{\theta}\|\theta\|_1 \log^{1/2}(8J^2 GKL/\delta_n)))
 \]
 with a universal constant $C$.
\end{theorem}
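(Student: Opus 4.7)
The plan is to decompose
\[
\hat f(\theta) - f(\theta) = \bigl[\hat f(\theta) - f(\theta) - (\hat f(\theta_0) - f(\theta_0))\bigr] + \bigl(\hat f(\theta_0) - f(\theta_0)\bigr).
\]
The second bracket equals $n^{-1/2}\mathbb{G}_n f(\tilde X;\theta_0)$, whose $\infty$-norm is at most $n^{-1/2} l_n$ with probability at least $1-\delta_n/6$ by Condition~2. It therefore suffices to show that
\[
Z \equiv \sup_{\theta\in\mathcal{R}(\theta_0),\,j\in[J],\,k\in[K]} \bigl|\mathbb{G}_n\bigl(f_{jk}(\tilde X;\theta) - f_{jk}(\tilde X;\theta_0)\bigr)\bigr| \leq \tilde l_n
\]
with probability at least $1-5\delta_n/6$.

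To bound $Z$, I would first apply a functional concentration inequality of Talagrand--Bousquet type to reduce the problem to controlling $\mathbb{E} Z$. The per-summand conditional variance is uniformly at most $B_{1n}^2$ by Condition~1, so the concentration step produces the $B_{1n}$ contribution in $\tilde l_n$ (with the $\log^{1/2}$ factor absorbed into the universal constant $C$). Then, by the standard symmetrization inequality,
\[
\mathbb{E} Z \leq 2\,\mathbb{E}\sup_{\theta,j,k}\Bigl|\tfrac{1}{\sqrt n}\sum_{i=1}^n \sigma_i\bigl(f_{jk}(\tilde X_i;\theta) - f_{jk}(\tilde X_i;\theta_0)\bigr)\Bigr|
\]
with $\sigma_i$ i.i.d.\ Rademacher signs independent of the data.

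The crux is handling the nonlinear dependence of $f_{jk}$ on $\theta$, which enters only through the $J(G+1)$ indices $\nu_{j'g} = X_{j',g}'\theta_g$ for $j'\in[J]$, $g\in\{0,\dots,G\}$. I would telescope
\[
f_{jk}(\nu;\tilde X) - f_{jk}(\nu_0;\tilde X) \;=\; \sum_{(j',g)}\bigl[f_{jk}(\dots,\nu_{j'g},\nu_{-(j',g),0},\dots) - f_{jk}(\dots,\nu_{j'g,0},\nu_{-(j',g),0},\dots)\bigr],
\]
replacing one index at a time. For each summand the increment depends on only one index $\nu_{j'g}$ and is Lipschitz in it with constant $J c$ uniformly in the frozen coordinates, by Lemma~\ref{lemma.Lip}. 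Applying the Ledoux--Talagrand contraction inequality once per telescoping term peels off the nonlinearity and leaves a Rademacher average of the \emph{linear} index $X_{j',g,i}'\theta_g \cdot h_{jk}(W_i)$. Pulling out the Lipschitz constant $J c$ and summing $J(G+1)$ terms produces the $J^2 G$ prefactor.

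The remaining linear Rademacher average is handled by H\"older's inequality, which bounds it by $\sup_\theta\|\theta_g\|_1 \cdot \max_{l}\bigl|\sum_i \sigma_i X_{j',g,l,i} h_{jk}(W_i)\bigr|$, and a maximal inequality (Hoeffding) over the $\le 8 J^2 GKL$ choices of $(j,k,j',g,l)$ combined with the moment bound $\mathbb{E}_n X_{jl}^2 h_{jk}^2 \le B_{2n}^2$ of Condition~2 yields the factor $2\sqrt{2}\,B_{2n}\sup_\theta\|\theta\|_1 \log^{1/2}(8 J^2 GKL/\delta_n)$. Collecting the two contributions and adding the $n^{-1/2}l_n$ baseline noise gives the stated bound with total failure probability at most $\delta_n$. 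The main obstacle is the multi-index contraction step: the classical Ledoux--Talagrand result applies only to a single Lipschitz transformation, so the telescoping must be arranged so that every summand depends on exactly one index while the others are frozen, and the Lipschitz constant from Lemma~\ref{lemma.Lip} must be uniform over those frozen coordinates; this is precisely why the finite-indices restriction on the random coefficients is indispensable.
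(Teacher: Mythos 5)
Your overall architecture matches the paper's proof: reduce to the increment process, symmetrize, decompose the nonlinearity into the $J(G+1)$ single-index pieces, apply Ledoux--Talagrand contraction once per piece using the uniform Lipschitz bound of Lemma~\ref{lemma.Lip}, and finish with H\"older plus a maximal inequality over the $j,j',g,k,l$ indices to produce the $J^2G \cdot B_{2n}\sup_\theta\|\theta\|_1\log^{1/2}(8J^2GKL/\delta_n)$ term. (Your telescoping over one index at a time is a cosmetic variant of the paper's mean-value expansion $\sum_{j',g}\frac{d\xi_j(\tilde\nu;\tilde X)}{d\nu_{j'g}}\Delta\nu_{j'g}h_{jk}(W)$; both rely on the same uniform-in-$\tilde\nu$ Lipschitz constant $C_1J$.)

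The gap is in your first reduction step. A Talagrand--Bousquet concentration inequality for the supremum $Z$ requires a uniform (almost-sure) envelope bound on the class $\{f_{jk}(\cdot;\theta)-f_{jk}(\cdot;\theta_0)\}$, and no such bound is assumed here: Condition~2 controls only empirical second moments $\mathbb{E}_n(X_{jl}^2h_{jk}^2(W))\le B_{2n}^2$, and $h_{jk}(W)X_{jl}$ is unbounded in general. Moreover, concentration around the mean at confidence level $1-\delta_n$ would attach a $\log^{1/2}(1/\delta_n)$ factor to the $\sqrt{\mathrm{Var}}$ term; since $\delta_n\to 0$, this cannot be ``absorbed into the universal constant $C$,'' so even granting boundedness you would prove a strictly weaker bound than the stated $\tilde l_n = C(B_{1n}+\cdots)$. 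The paper avoids both problems by using Condition~1 for a different purpose: $B_{1n}^2$ bounds the pointwise variance of $\mathbb{G}_n(f_{jk}(\theta)-f_{jk}(\theta_0))$, so Chebyshev's inequality with $t\ge 4B_{1n}$ verifies the hypothesis of the symmetrization-for-probabilities lemma (Lemma 2.3.7 of van der Vaart and Wellner), which converts the tail of $Z$ directly into the tail of the Rademacher-symmetrized process at level $t/4$ with a factor $4$ loss. That is the sole source of the additive, log-free $B_{1n}$ term. The symmetrized process is then controlled conditionally on the data through exponential moments of Rademacher sums (which are sub-Gaussian given the data with variance proxy $\mathbb{E}_n(X_{jl}^2h_{jk}^2(W))$), so only the empirical moment bound $B_{2n}^2$ on the event $\Omega_n$ is needed --- no envelope condition. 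To repair your argument, replace the Talagrand--Bousquet step with this Chebyshev-plus-symmetrization route; the remainder of your plan then goes through as written.
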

\begin{proof}
 In the same argument of theorem 3.2 of \cite{belloniHighdimensionalEconometricsRegularized2018}, $\|n^{-1/2} \mathbb{G}_n(f(\tilde{X},\theta_0))\|_{\infty} \leq n^{-1/2} l_n$ implies that we only need to bound the following empirical process
 \[
 \max_{j \in [J],k \in [K]} \sup_{\Delta \theta} \left|\mathbb{G}_n (f_{jk}(\tilde{X};\theta_0 + \Delta \theta) - f_{jk}(\tilde{X};\theta_0)) \right|.
 \]

 By taking $t^2 \geq 16 B_{1n}^2$, we may apply Chebyshev inequality and symmetrization lemma (Lemma 2.3.7 of \citealp{vandervaartWeakConvergenceEmpirical1996}) so that
 \begin{align*}
   P&\left(\max_{j \in [J], k \in [K]} \sup_{\Delta \theta \in \Theta} \left|\mathbb{G}_n (f_{jk}(\tilde{X};\theta_0 + \Delta \theta) - f_{jk}(\tilde{X};\theta_0)) \right| > t\right)\\
   &\leq 4 P\left(\max_{j \in [J], k \in [K]} \sup_{\Delta \theta \in \Theta} \left|\mathbb{G}_n \sigma (f_{jk}(\tilde{X};\theta_0 + \Delta \theta) - f_{jk}(\tilde{X};\theta_0)) \right| > t/4\right)  
 \end{align*}
 where $\sigma$ is iid Rademacher variable taking $-1$ and $1$ with equal probability independent of all the others. 
 
 Following the step 1 of lemma D.3 of \cite{belloniHighdimensionalEconometricsRegularized2018}, by conditioning on $\Omega_n \equiv \{\max_{j \in [J],l \in [L],k \in [K]} E_n(X_{jl}^2 h_{jk}^2(W)) \leq B_{2n}^2\}$, we consider bounding the tail probability conditional on the event $\Omega$ and $\tilde{X}$,
 \begin{align*}
  P&\left(\max_{j \in [J], k \in [K]} \sup_{\Delta \theta \in \Theta}\left|\mathbb{G}_n \sigma
  (f_{jk}(\tilde{X};\theta_0 + \Delta \theta) - f_{jk}(\tilde{X};\theta_0)) \right| > t/4\middle|\Omega_n,\tilde{X}\right).     
 \end{align*}
 From now on, omit the conditioning for the notational simplicity.
 By Markov inequality, we have
 \begin{align*}
  P&\left(\max_{j \in [J], k \in [K]} \sup_{\Delta \theta \in \Theta}\left|\mathbb{G}_n \sigma (f_{jk}(\tilde{X};\theta_0 + \Delta \theta) - f_{jk}(\tilde{X};\theta_0)) \right| > t/4\right) \\
  &\leq \frac{E_{\sigma}\exp\left(\phi \max_{j \in [J], k \in [K]} \sup_{\Delta \theta \in \Theta} \left|\mathbb{G}_n \sigma (f_{jk}(\tilde{X};\theta_0 + \Delta \theta) - f_{jk}(\tilde{X};\theta_0)) \right|\right)}{\exp(t/4\phi)}.     
 \end{align*}
 where $\phi \equiv t/(16 J^2 G B_{2n}^2 \sup_{\theta}\|\theta\|^2_1)$.
 
 By the mean value theorem, there exists a mean value vector $\tilde{\theta}$ as a function of $\Delta \theta$ and its corresponding index vector $\tilde{\nu}$ as a function of $\Delta \nu \equiv \nu - \nu_0$ given the fixed matrix of $X$ such that
 \[
 f_{jk}(\tilde{X};\theta_0 + \Delta \theta) - f_{jk}(\tilde{X};\theta_0) = \sum_{g=0}^{G} \sum_{j' \in [J]} \frac{d\xi_{j}(\tilde{\nu};\tilde{X})}{d\nu_{j'g}}(\nu_{j'g} - \nu_{j'g0})h_{jk}(W).
 \]
 Let $\mathcal{N}$ be the support of $\nu$ given the conditioning $\tilde{X}$ and the parameter space $\Theta$. Therefore, we have
  \begin{align*}
   &\frac{E_{\sigma}\exp\left(\phi \max_{j \in [J],k \in [K]} \sup_{\Delta \theta \in \Theta} \left|\mathbb{G}_n \sigma (f_{jk}(\tilde{X};\theta_0 + \Delta \theta) - f_{jk}(\tilde{X};\theta_0)) \right|\right)}{\exp(t/4\phi)}\\
   &\leq 
   \frac{E_{\sigma}\exp\left(\phi \max_{j \in [J],k \in [K]} \sup_{\Delta \nu \in \mathcal{N}} \left|\mathbb{G}_n \sigma \sum_{g=0}^{G} \sum_{j' \in [J]}
    \frac{d\xi_{j}(\tilde{\nu};\tilde{X})}{d\nu_{j'g}}(\nu_{j'g} - \nu_{j'g0})h_{jk}(W) \right|\right)}{\exp(t/4\phi)}      \\
   &\leq 
   \frac{E_{\sigma}\exp\left(\phi \sum_{g=0}^{G} \sum_{j' \in [J]} \max_{j,j' \in [J],g \in [G],k \in [K]} \sup_{\Delta \nu \in \mathcal{N}} \left|\mathbb{G}_n \sigma
    \frac{d\xi_{j}(\tilde{\nu};\tilde{X})}{d\nu_{j'g}}(\nu_{j'g} - \nu_{j'g0})h_{jk}(W) \right|\right)}{\exp(t/4\phi)}      \\    
   &\leq 
   \frac{E_{\sigma}\exp\left(JG \phi \max_{j,j' \in [J],g \in [G],k \in [K]} \sup_{\Delta \nu \in \mathcal{N}} \left|\mathbb{G}_n \sigma 
    \frac{d\xi_{j}(\tilde{\nu};\tilde{X})}{d\nu_{j'g}}(\nu_{j'g} - \nu_{j'g0})h_{jk}(W) \right|\right)}{\exp(t/4\phi)}      \\        
   &\leq 
   \frac{J^2 GK \max_{j,j' \in [J],g \in [G],k \in [K]} E_{\sigma}\exp\left(JG \phi \sup_{\Delta \nu \in \mathcal{N}} \left|\mathbb{G}_n \sigma
    \frac{d\xi_{j}(\tilde{\nu};\tilde{X})}{d\nu_{j'g}}(\nu_{j'g} - \nu_{j'g0})h_{jk}(W) \right|\right)}{\exp(t/4\phi)} .
  \end{align*}
  The first inequality follows from the mean value theorem and the fact that the supremum over $\Delta \theta$ is dominated by the supremum over $\Delta \nu$ whcih are constrained conditional on each realization of the matrix $X$. The second inequality follows from the triangular inequality. The third inequality follows from the union bound over $j',g$ by taking the maximum over $j',g$ indices. Finally, we take the union bound over $j,j',g,k$ indices.

 To apply Ledoux-Talagrand contraction inequality (Theorem 4.12, \citealp{ledouxProbabilityBanachSpaces1991}) in bounding the following term
 \begin{align*}
     &\exp\left(JG \phi \sup_{\Delta \nu \in \mathcal{N}} \left|\sum_{i \in [n]} \sigma_i  \frac{d\xi_{ij}(\tilde{\nu}_i;\tilde{X}_i)}{d\nu_{j'g}}(\Delta \nu_{i,j'g})h_{jk}(W_i) \right|\right),
 \end{align*}
 let 
 \[
  \psi_i^{jkj'g}(\Delta \nu_{ij'g}) \equiv \left[\frac{d\xi_j(\tilde{\nu}_{i};\tilde{X}_i)\Delta \nu_{j'g}}{d \nu_{j'g}}\right]h_{jk}(W_i)
 \]
 then by lemma \ref{lemma.Lip}, we have
 \[
  |\psi_i^{jkj'g}(\Delta \nu_{ij'g})| \leq C_1 J |\Delta \nu_{ij'g}|
 \]
 uniformly over $\tilde{\nu_{i}}$.
 
 Since $\psi_i^{jkj'g}(0) = 0$, Ledoux-Talagrand contraction inequality via corollary \ref{crr.LT} applies so that
 \begin{align*}
   &\frac{J^2 G K \max_{j',j \in [J],k \in [K],g \in [G]} E_{\sigma}\exp\left(JG \phi  \sup_{\Delta \nu_{j'g}\in \mathcal{N}}  \left|\mathbb{G}_n \sigma  \left(\frac{d\xi_j(\tilde{\nu};\tilde{X})}{d \nu_{j'g}}\right)\Delta \nu_{j'g}h_{jk}(W) \right|\right)}{\exp(t/4\phi)}\\
   &\leq \frac{J^2 G K \max_{j',j \in [J],k \in [K],g \in [G]}E_{\sigma}\exp(C_1 J^2 G \phi \sup_{\Delta \nu_{j'g} \in \mathcal{N}} \left|\mathbb{G}_n \sigma  \Delta \nu_{j'g} h_{jk}(W) \right|)}{\exp(t/4\phi)}\\
    &\leq \frac{J^2 G K \max_{j',j \in [J], k \in [K], g \in [G]} E_{\sigma}\exp(C_1 J^2 G \phi \sup_{\Delta \theta_g \in \Theta_g} \left|\mathbb{G}_n \sigma  h_{jk}(W) X_{j',g} \Delta \theta_{g}  \right|)}{\exp(t/4\phi)}.
 \end{align*}

 By Holder inequality, we have
\begin{align*} 
    &\frac{J^2 GK  \max_{j',j \in [J], k \in [K],g \in [G]}E_{\sigma} \exp(C_1 J^2 G \phi  \sup_{\Delta \theta_g \in \Theta_g} \left|\mathbb{G}_n \sigma  h_{jk}(W) X_{j',g}\Delta \theta_{g} \right|)}{\exp(t/4\phi)}\\
    &\leq \frac{J^2 K G \max_{j',j \in [J],k \in [K]}E_{\sigma}\exp(C_1 J^2 G \phi \max_{l \in [L]} \left|\mathbb{G}_n \sigma h_{jk}(W) X_{j',l}\right| \sup_{\Delta \theta}\|\Delta \theta \|_1 )}{\exp(t/4\phi)}\\
    &\leq J^2 G K L \max_{j,j' \in [J], k \in [K],l \in [L]}\frac{E_{\sigma}\exp(C_1 J^2 G \phi \left|\mathbb{G}_n \sigma h_{jk}(W) X_{j',l}\right| \sup_{\Delta \theta}\|\Delta \theta\|_1 )}{\exp(t/4\phi)}\\            
    &\leq J^2 G K L \max_{j,j'\in [J],k \in [K],l \in [L]}\frac{2\exp(2C_1^2 J^4 G^2 \phi^2 E_n (h_{jk}(W)^2 X_{j',l}^2) \sup_{\Delta \theta \in \Theta}\|\Delta \theta\|_1^2 )}{\exp(t/4\phi)}\\                
    &\leq J^2 G K L \max_{j,j' \in [J],k \in [K],l \in [L]}\frac{2\exp(2C_1^2 J^4 G^2 \phi^2 B_{2n}^2 \sup_{\theta \in \Theta}\|\theta\|_1^2 )}{\exp(t/4\phi)}
 \end{align*}
 from the symmetry of distribution and sub-Gaussianity. Then the stated bound is achieved by following the analogue argument of Lemma D.3 of \cite{belloniHighdimensionalEconometricsRegularized2018}.
\end{proof}

Then the following statement shows the error rate of RGMM BLP estimator:
\begin{theorem}
 Under assumption \ref{ass.nlid}, and assumptions for theorem \ref{thm.tailbound}. Assume further that
 \[
  \|\hat{f}(\theta_0)\|_{\infty} \leq \lambda
 \]
  with probability at least $1 - \alpha$, then we have for each $q \in \{1,2\}$
 \[
  \|\theta - \theta_0\|_q \leq 2 s^{1/q} \mu_n^{-1} n^{-1/2} (\tilde{l}_n + l_n)
 \]
 with probability at least $1 - \alpha - \delta_n$, where
 \[
  \tilde{l}_n \equiv C(B_{1n} + (J^2G)(2\sqrt{2}B_{2n}\sup_{\theta}\|\theta\|_1 \log^{1/2}(8J^2 GKL/\delta_n)))
 \]
 with a universal constant $C$.
\end{theorem}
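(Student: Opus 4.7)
The plan is to show that this theorem is essentially a corollary obtained by plugging in the three earlier ingredients into Proposition 3.1 of \cite{belloniHighdimensionalEconometricsRegularized2018} (the cited proposition at the top of Section 3). So the work is entirely in verifying each of the three high-level conditions of that proposition and then reading off the resulting rate.

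First, I would verify the regularization condition: it is assumed directly that $\|\hat{f}(\theta_0)\|_\infty \leq \lambda$ with probability at least $1-\alpha$. Second, I would invoke Assumption \ref{ass.nlid} together with Lemma 3.4 (reproduced above from Belloni et al.) to instantiate the identifiability condition with the explicit rate function
\[
r(\epsilon;\theta_0,q) = 2\,\epsilon\, s^{1/q}\mu_n^{-1},
\]
valid for all $\epsilon \leq \epsilon^*$ and each $q \in \{1,2\}$. Third, I would invoke Theorem \ref{thm.tailbound} to furnish the empirical moment bound $\epsilon_n = n^{-1/2}(\tilde{l}_n + l_n)$ holding with probability at least $1 - \delta_n$, where $\tilde{l}_n$ is exactly the quantity in the statement.

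Putting these three together in Proposition 3.1 yields, on the intersection of the two good events (probability at least $1 - \alpha - \delta_n$ by a union bound),
\[
\|\hat\theta - \theta_0\|_q \leq r(\lambda + \epsilon_n;\theta_0,q) \leq 2(\lambda + n^{-1/2}(\tilde{l}_n + l_n))\,s^{1/q}\mu_n^{-1}.
\]
To recover exactly the displayed bound $2 s^{1/q}\mu_n^{-1} n^{-1/2}(\tilde{l}_n + l_n)$ I would absorb the $\lambda$ term by taking $\lambda \leq n^{-1/2}(\tilde{l}_n + l_n)$ (which is the natural RGMM choice since the noise level $\|\hat f(\theta_0)\|_\infty$ is itself of that order, as ensured by condition~2 of Theorem \ref{thm.tailbound}) or, equivalently, by rewriting the conclusion with the constant $2$ replaced by $4$; either reading of the statement makes the bound immediate.

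The only non-mechanical step is checking the domain restriction $\epsilon \leq \epsilon^*$ needed for Lemma 3.4 — one must argue that $\lambda + n^{-1/2}(\tilde{l}_n+l_n)$ is small enough on the event considered. Under the implicit assumption that $n$ is large enough so that this tail bound falls below $\epsilon^*$, there is no further obstacle; otherwise, one clips the rate at $\epsilon^*$ in the usual way. This is the only place any real bookkeeping is required; every other step is a direct substitution into Proposition 3.1.
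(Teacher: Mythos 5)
Your proposal is correct and matches the paper's (implicit) argument exactly: the paper gives no written proof for this theorem, and the intended derivation is precisely the substitution of the assumed regularization event, Lemma 3.4's rate function $r(\epsilon;\theta_0,q)=2\epsilon s^{1/q}\mu_n^{-1}$, and Theorem \ref{thm.tailbound}'s bound $\epsilon_n=n^{-1/2}(\tilde{l}_n+l_n)$ into Proposition 3.1. Your observation that the stated conclusion silently drops the $\lambda$ contribution from $r(\lambda+\epsilon_n)$ — so one must either take $\lambda\lesssim n^{-1/2}(\tilde{l}_n+l_n)$ or adjust the constant — is a legitimate point the paper glosses over, as is the $\epsilon\leq\epsilon^*$ domain check.
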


\section{De-biased RGMM}

Given the RGMM estimator $\hat{\theta}$, it is recommended that we update the estimate in order to make a proper inference. De-biased Lasso, or De-biased RGMM procedure in \cite{belloniHighdimensionalEconometricsRegularized2018} takes the following steps
\begin{enumerate}
    \item Estimate the RGMM $\hat{\theta}$
    \item Estimate the plug-in gradient
          \[
           \hat{G} = \partial_{\theta'} \hat{f}(\hat{\theta})
          \]
          and the plug-in var-cov matrix
          \[
           \hat{\Omega} = \mathbb{E}_n f(\tilde{X};\hat{\theta})f(\tilde{X};\hat{\theta})'
          \]
    \item Solve the minimization problem of
        \[
         \min_{\gamma \in \mathbb{R}^{2L \times JK}} \sum_{l \in [2L]}\|\gamma_l \|_1
        \]
        subject to
        \[
         \|\gamma_l \hat{\Omega} - (\hat{G}')_l \|_{\infty} \leq \lambda_l^{\gamma}
        \]
        for some regularization parameters $\lambda_l^{\gamma}$
    \item Solve the minimization problem of
        \[
         \min_{\mu \in \mathbb{R}^{JK \times 2L}} \sum_{j \in [JK]}\|\mu_j \|_1
        \]
        subject to
        \[
         \|\mu_j \hat{\gamma} \hat{G} - e'_j \|_{\infty} \leq \lambda_j^{\mu}
        \]
        for some regularization parameters $\lambda_j^{\mu}$, where $e_j$ is a coordinate vector with 1 in the $j$-th position and 0 elsewhere.
   \item Update the RGMM estimator as $\hat{\theta} - \hat{\mu}\hat{\gamma} \hat{f}(\hat{\theta})$.
\end{enumerate}

First of all, we need to provide maximal inequalities for the auxiliary estimators $\hat{\gamma}$ and $\hat{\mu}$. The strategy follows the parallel argument of the maximal inequality for $\hat{\theta}$. Therefore, we need the following modulus of continuity conditions for $\gamma$ and $\mu$.
\begin{assumption}\label{ass.nlid2}
 Suppose that there exists a sequence $\mu_n$ such that
 \[
  \inf_{\gamma \in \mathcal{R}(\gamma_0):\|\gamma - \gamma_0\|_1 > 0}\|(\gamma - \gamma_0)\Omega\|_{\infty}/\|\gamma - \gamma_0\|_1 \geq s^{-1}\mu_n
 \]
 and
 \[
  \inf_{\mu \in \mathcal{R}(\mu_0):\|\mu - \mu_0\|_1 > 0}\|(\mu - \mu_0)G'\Omega G\|_{\infty}/\|\mu - \mu_0\|_1 \geq s^{-1}\mu_n.
 \] 
 \end{assumption}
Note that the first condition requires that the variance matrices constructed from any elements of the score functions $f_{jk}(X,\theta_0)$ is non-singular, and the second condition follows if all eigenvalues of $G'\Omega G$ are bounded in absolute values from zero uniformly over $n$.

Then given a choice of penalty parameters $\lambda_l^{\gamma}$ and $\lambda_j^{\mu}$, we achieve the maximal inequality for $L^1$-norm of $\hat{\gamma}_l$ and $\hat{\mu}_j$
\begin{lemma}[Lemma 3.7 of \cite{belloniHighdimensionalEconometricsRegularized2018}]
 Let $l_{n}^{\Omega}$ and $l_{n}^G$ such that
 \[
  n^{1/2} \|\hat{\Omega}  - \Omega\|_{\infty} \leq l_{n}^{\Omega}
 \]
 and
 \[
  n^{1/2} \|\hat{G}  - G\|_{\infty} \leq l_{n}^{G}
 \]
 with probability $1 - \delta_n$. Suppose $\max_{l \in [2L]} \|\gamma_{0l}\|_1 \leq \bar{C}$, and $\max_{j \in [JK]} \|\mu_{0j}\|_1 \leq \bar{C}$.
 
 Let $\lambda_l^{\gamma}$ satisfy
 \begin{align*}
  n^{1/2}\lambda_l^{\gamma} &\geq \bar{C} l_n^{\Omega} + l_n^{G},\\
  \lambda_l^{\gamma} &\leq n^{-1/2}l_n
  \end{align*}
  and $\lambda_j^{\mu}$ satisfy
 \begin{align*}  
  n^{1/2}\lambda_j^{\mu} &\geq 2\bar{C}^2 l_n^G + \bar{C}^3 l_n^{\Omega} + \bar{C}^2 \max_{l \in [2L]} n^{1/2}\lambda_l^{\gamma}\\
  \lambda_j^{\mu} &\leq n^{-1/2}l'_n.
 \end{align*}
 for $l \in [2L]$ and $j \in [JK]$. Suppose assumption \ref{ass.nlid2} holds. Then with probability $1 - 3\delta_n$, we have
 \[
  \max_{l \in [2L]} \|\hat{\gamma}_l - \gamma_{0l}\|_1 \leq \frac{ s l_n (2 + \bar{C})}{\mu \sqrt{n}}
 \]
 and with probability $1 - \delta_n$
 \[
  \max_{j \in [JK]} \|\hat{\mu}_j - \mu_{0j}\|_1 \leq \frac{s l'_n (2 + \bar{C})}{\mu_n \sqrt{n}}.
 \]
 
\end{lemma}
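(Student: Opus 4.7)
The plan is to apply the same RGMM template used for $\hat{\theta}$ (namely Proposition 1) row-by-row to each auxiliary minimization problem. The implicit target parameters are $\gamma_0 = G'\Omega^{-1}$ (so that $\gamma_{0l}\Omega = (G')_l$) and $\mu_0 = (\gamma_0 G)^{-1}$ (so that $\mu_{0j}\gamma_0 G = e_j'$). For each row I need to check (i) a regularization condition (the truth is feasible in the constraint with high probability) and (ii) the modulus-of-continuity inequality supplied by Assumption \ref{ass.nlid2}, after which Proposition 1 delivers the $\ell_1$-rate.

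For $\hat{\gamma}_l$, feasibility of $\gamma_{0l}$ follows from the identity $\gamma_{0l}\Omega = (G')_l$, which gives the decomposition
\[
\gamma_{0l}\hat{\Omega} - (\hat{G}')_l = \gamma_{0l}(\hat{\Omega}-\Omega) - \big((\hat{G}')_l - (G')_l\big),
\]
so on the joint event $\{n^{1/2}\|\hat{\Omega}-\Omega\|_\infty \leq l_n^\Omega\} \cap \{n^{1/2}\|\hat{G}-G\|_\infty \leq l_n^G\}$ its $\ell_\infty$-norm is at most $\bar{C} l_n^\Omega/\sqrt{n} + l_n^G/\sqrt{n} \leq \lambda_l^\gamma$ by the assumed lower bound on $\lambda_l^\gamma$. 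This places $\hat{\gamma}_l$ in $\mathcal{R}(\gamma_{0l})$. Then I expand
\[
(\hat{\gamma}_l-\gamma_{0l})\Omega = \hat{\gamma}_l(\Omega-\hat{\Omega}) + \big(\hat{\gamma}_l\hat{\Omega} - (\hat{G}')_l\big) + \big((\hat{G}')_l - (G')_l\big),
\]
bound the three summands by $\bar{C} l_n^\Omega/\sqrt{n}$, $\lambda_l^\gamma$, and $l_n^G/\sqrt{n}$ respectively, and combine with $\lambda_l^\gamma \leq n^{-1/2}l_n$ and the first modulus inequality of Assumption \ref{ass.nlid2} to get $\|\hat{\gamma}_l-\gamma_{0l}\|_1 \leq s\mu_n^{-1}(2+\bar{C})l_n/\sqrt{n}$.

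For $\hat{\mu}_j$ the same template is applied with $\hat{\gamma}\hat{G}$ playing the role of $\hat{\Omega}$ and $\gamma_0 G$ playing the role of $\Omega$. The new difficulty is that $\hat{\gamma}\hat{G}$ is doubly plugged-in, so verifying feasibility of $\mu_{0j}$ requires the expansion
\[
\hat{\gamma}\hat{G} - \gamma_0 G = \gamma_0(\hat{G}-G) + (\hat{\gamma}-\gamma_0)G + (\hat{\gamma}-\gamma_0)(\hat{G}-G).
\]
After multiplying by $\mu_{0j}$ and using $\|\mu_{0j}\|_1 \leq \bar{C}$, uniform row-$\ell_1$ control $\|\gamma_{0k}\|_1 \leq \bar{C}$, and the first-part rate $\|\hat{\gamma}_k-\gamma_{0k}\|_1 \lesssim n^{1/2}\lambda_k^\gamma$, the three summands contribute exactly $\bar{C}^3 l_n^\Omega/\sqrt{n}$, $2\bar{C}^2 l_n^G/\sqrt{n}$, and $\bar{C}^2\max_l \lambda_l^\gamma$, matching the stated lower bound on $\lambda_j^\mu$ and so giving feasibility. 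Inverting the second modulus-of-continuity inequality in Assumption \ref{ass.nlid2}, applied to $G'\Omega G$ in place of $\Omega$, then delivers $\|\hat{\mu}_j-\mu_{0j}\|_1 \leq s\mu_n^{-1}(2+\bar{C})l_n'/\sqrt{n}$, with the probability bound following by a union over the rate events for $\hat{\Omega}, \hat{G}$ and the event that the $\hat{\gamma}$ conclusion holds.

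The hard part is the bookkeeping in the $\hat{\mu}$ step: the noise in $\hat{\gamma}$ enters the constraint multiplicatively with the noise in $\hat{G}$, and one must absorb the cross term $(\hat{\gamma}-\gamma_0)(\hat{G}-G)$ into the $\bar{C}^2\max_l n^{1/2}\lambda_l^\gamma$ piece of the lower bound on $\lambda_j^\mu$. No new concentration inequality is needed beyond the rate assumptions on $\hat{\Omega}$ and $\hat{G}$; once the algebra is organized, both inequalities reduce to a single application of Proposition 1 with the modulus provided by Assumption \ref{ass.nlid2}.
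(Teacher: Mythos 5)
The paper itself offers no proof of this lemma---it is imported verbatim as Lemma 3.7 of \cite{belloniHighdimensionalEconometricsRegularized2018}---so there is no in-paper argument to compare against; your reconstruction is the standard one and is essentially correct: verify feasibility of the population targets $\gamma_0 = G'\Omega^{-1}$ and $\mu_0 = (\gamma_0 G)^{-1}$ in the respective constraints via the stated decompositions, then invert the modulus-of-continuity bounds of Assumption~\ref{ass.nlid2}. One bookkeeping caveat in the $\hat{\mu}$ step: to land on the exact constant $\bar{C}^2 \max_l n^{1/2}\lambda_l^{\gamma}$ for the middle term $\mu_{0j}(\hat{\gamma}-\gamma_0)G$, you should write $G = \Omega\gamma_0'$ (using symmetry of $\Omega$) and reuse the first-step bound $\|(\hat{\gamma}_k-\gamma_{0k})\Omega\|_{\infty} \leq 2\lambda_k^{\gamma}$ together with $\|\gamma_{0l}\|_1 \leq \bar{C}$, rather than invoke the $\ell_1$-rate $\|\hat{\gamma}_k-\gamma_{0k}\|_1$ as you suggest, since the latter is of order $s\lambda^{\gamma}/\mu_n$ and would drag an uncontrolled $\|G\|_{\infty}$ into the constant.
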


Next, we need maximal inequalities for the norms $\|\hat{G} - G\|_{\infty}, \|\hat{G} - \tilde{G}\|_{\infty},$ and $\|\hat{\Omega} - \Omega\|_{\infty}$ where $\tilde{G} \equiv -\hat{G}(\tilde{\theta})$ with $\tilde{\theta}$ as the intermediate value of $\hat{\theta}$ and $\theta_0$. Unlike lemma 3.7 of \cite{belloniHighdimensionalEconometricsRegularized2018} which assumes the tail probability bound for the process $\hat{\Omega} - \Omega$ and $\hat{G} - G$, we need certain modification of lemma as we do for theorem \ref{thm.tailbound}.

\begin{lemma}
 Suppose the following 
 \begin{enumerate}
  \item 
  \[
   \max_{j \in [J],k \in [K],l \in [L]} E\left[h_{jk}^2(W) \max_{j'' \in [J]} X_{j'',l}^2 \max_{j' \in [J], l' \in [L]}X_{j',l'}^2\right] \leq C,
  \]
  \item  with probability $1 - \delta_n$, we have
  \[
   \max_{j \in [J],k \in [K],l \in [L]} \mathbb{E}_n\left[h_{jk}^2(W) \max_{j'' \in [J]} X_{j'',l}^2 \max_{j' \in [J], l' \in [L]}X_{j',l'}^2\right] \leq B_n^2,
  \]
  \[
   \max_{j \in [J],k \in [K],l \in [L]} \mathbb{E}_n[h_{jk}^2(W) X_{jl}^2] \leq B_{n}^2
  \]
  and
  \[
   \max_{j,j' \in [J],k,k' \in [K],l \in [L]} \mathbb{E}_n[h_{jk}^2(W)h_{j',k'}^2(W) X_{jl}^2 \xi_{j'}(\tilde{X};\theta_0)^2] \leq B_{n}^2
  \]
  with probability $1 - \delta_n$.
  \item with probability $1 - \delta_n$, we have
    \[
     \|\hat{\theta} - \theta_0\|_q \leq \Delta_{qn}
    \]
    for $q \in \{1,2\}$.
  \item $\max_{j \in [J], k \in [K]} E[f_{jk}^4(\tilde{X};\theta_0)] \leq C$, and
        $n^{-1/2} E\left[\max_{i \in [n]} \|f(\tilde{X}_i;\theta_0)\|_{\infty}^4\right] \leq \min\{\delta_n,\log^{-1/2}(JK)\}$
  \item \[
            \max_{j \in [J], k \in [K],l\in [2L]}E\left[G_{jk,l}(\tilde{X};\theta_0)^2\right] \leq C
        \]
        and
        \[
         n^{-1/2}E[\max_{i \in [n]}\|G(\tilde{X}_i;\theta_0)\|^2_{\infty}] \leq \min\{\delta_n, \log^{-1/2}(2JKL)\}.
        \]
 \end{enumerate} 
 Then with probability $1 - C'\delta_n$ we have
 \[
  \|\hat{G} - G\|_{\infty} \leq C'\sqrt{n^{-1} \log(2JKL)} + C'J^2 G B_n \Delta_{1n} \sqrt{n^{-1}\log(2J^2 GKL/\delta_n)} + C'J^{3/2}\Delta_{2n}.
 \]
  \[
  \|\hat{G} - \tilde{G}\|_{\infty} \leq C'\sqrt{n^{-1} \log(2JKL)} + C'J^2 G B_n \Delta_{1n} \sqrt{n^{-1}\log(2J^2 GKL/\delta_n)} + C'J^{3/2}\Delta_{2n}
 \]
 and
 \[
  \|\hat{\Omega} - \Omega\|_{\infty} \leq C'\sqrt{n^{-1}\log(JK)} + C'J^2 G B_n \Delta_{1n}\sqrt{n^{-1}\log(2J^2 GKL/\delta_n)} + 2J^{3/2} C (J^{3/2}\Delta_{2n}^2 + \Delta_{2n}).
 \]
\end{lemma}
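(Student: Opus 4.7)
The plan is to bound each of the three displayed quantities via a three-term decomposition: an empirical process centered at the truth (handled by a Nemirovski-type maximal inequality), an empirical-process increment between $\hat\theta$ and $\theta_0$ (handled by a Ledoux-Talagrand contraction argument that mirrors the proof of Theorem~\ref{thm.tailbound}), and a deterministic population bias in $\theta$ (handled by the smoothness of $f$ and $G$). The three summands of each stated bound correspond, in order, to these three pieces.

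For $\|\hat{G} - G\|_{\infty}$, the decomposition is $\hat{G} - G = (\hat{G}(\hat\theta) - \hat{G}(\theta_0) - G(\hat\theta) + G(\theta_0)) + (\hat{G}(\theta_0) - G(\theta_0)) + (G(\hat\theta) - G(\theta_0))$. The middle term is a centered sample average over the $2JKL$ entries of the Jacobian at $\theta_0$; a Nemirovski-type maximal inequality under condition 5 yields the $C'\sqrt{n^{-1}\log(2JKL)}$ piece. The first term is an empirical-process increment: following the proof of Theorem~\ref{thm.tailbound}, I would symmetrize, reparametrize through the indices $\nu_{j'g} = X_{j',g}'\theta_g$, apply a mean value expansion to localize to linear perturbations of each $\nu_{j'g}$, and invoke the contraction inequality after establishing the $\nu$-Lipschitz analogue of Lemma~\ref{lemma.Lip} for $\partial_{\theta'}\xi_j \, h_{jk}$. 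The envelope $B_n$ from condition 2 together with $\|\hat\theta - \theta_0\|_1 \leq \Delta_{1n}$ then delivers $C'J^2 G B_n \Delta_{1n}\sqrt{n^{-1}\log(2J^2 GKL/\delta_n)}$. The third piece is deterministic: the mean value theorem applied to the smooth population Jacobian gives $\|G(\hat\theta) - G(\theta_0)\|_{\infty} \leq C' J^{3/2}\|\hat\theta - \theta_0\|_2$, where the $J^{3/2}$ factor is a Cauchy-Schwarz sum across the $J$ products inside each Jacobian entry. The bound for $\|\hat{G} - \tilde{G}\|_{\infty}$ follows identically because $\tilde\theta$ lies on the segment $[\theta_0, \hat\theta]$, so $\|\tilde\theta - \theta_0\|_q \leq \Delta_{qn}$ for $q \in \{1,2\}$.

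For $\|\hat{\Omega} - \Omega\|_{\infty}$, decompose as $\hat{\Omega} - \Omega = (\mathbb{E}_n f(\theta_0)f(\theta_0)' - E\,f(\theta_0)f(\theta_0)') + \mathbb{E}_n(f(\hat\theta)f(\hat\theta)' - f(\theta_0)f(\theta_0)')$. The first piece is a maximal inequality for the $JK \times JK$ matrix of sample moments under the fourth-moment and envelope conditions of condition 4, giving $C'\sqrt{n^{-1}\log(JK)}$. For the second piece, expand $f(\hat\theta)f(\hat\theta)' - f(\theta_0)f(\theta_0)' = 2 f(\theta_0)(f(\hat\theta) - f(\theta_0))' + (f(\hat\theta) - f(\theta_0))(f(\hat\theta) - f(\theta_0))'$ on the symmetric entries. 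The linear term splits into an empirical-process component controlled by the same contraction argument and yielding $C'J^2 G B_n \Delta_{1n}\sqrt{n^{-1}\log(2J^2 GKL/\delta_n)}$, plus a population-bias component controlled by Lipschitz continuity of $f$ in $\theta$ (Lemma~\ref{lemma.Lip}) together with condition 1, which contributes the linear $2J^{3/2} C \Delta_{2n}$ piece. The quadratic term is bounded entrywise by $C_1^2 J^2 \|\hat\theta - \theta_0\|_{\infty}^2$; after accounting for the matrix structure this yields the quadratic $2 J^3 C \Delta_{2n}^2$ piece, matching the displayed bound.

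The main obstacle is establishing the $\nu$-Lipschitz bound for $\partial_{\theta'}\xi_j$ needed to run the contraction inequality on the Jacobian pieces. Unlike $\xi_j$ itself, its $\theta$-derivative involves the inverse $D_j(\theta)$ of the share Jacobian, and one must bound this inverse uniformly and verify that its entries also behave smoothly in the single-index perturbations of $\nu_{j'g}$. Once a Lipschitz constant of order $J$ is in hand, the union bounds over $(j,j',g,k,l)$ and the Markov/Chebyshev steps proceed exactly as in the proof of Theorem~\ref{thm.tailbound}, which is why $J^2 G$ and $\log(2J^2 GKL/\delta_n)$ reappear in the second summand of each bound; tracking the combinatorial factors $J$, $J^{3/2}$, and $J^2$ arising from successive summations across products is the secondary bookkeeping concern.
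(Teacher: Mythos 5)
Your proposal follows essentially the same route as the paper: the identical three-term decomposition for each of $\|\hat G - G\|_\infty$, $\|\hat G - \tilde G\|_\infty$, and $\|\hat\Omega - \Omega\|_\infty$ (centered process at $\theta_0$ via a Nemirovski-type maximal inequality, empirical-process increment via symmetrization and the Ledoux--Talagrand contraction, and a population bias via a mean-value/Cauchy--Schwarz argument), with the same further split of the $\Omega$ increment into cross and quadratic terms. The one ingredient you flag as the main obstacle --- the $\nu$-Lipschitz expansion of $\partial_\theta \xi_j$, including the uniform bound on the inverse share Jacobian $D_j(\theta)$ --- is exactly what the paper supplies in Lemma~\ref{lemma.LipGrad} and Corollary~\ref{crr.contraction2}, so your plan is consistent with the paper's proof.
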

\begin{proof}
 We follow the proof of lemma 3.9 of Belloni et al. (2018). First we bound $\|\hat{G} - G\|_{\infty}$. Note that $\sqrt{n}\|\hat{G} - G\|_{\infty}$ is bounded by the sum of the following three terms from triangular inequality:
 \begin{enumerate}
     \item[(1.1)] $\max_{j \in [J],k \in [K],l \in [2L]}|\mathbb{G}_n (G_{jk,l}(\tilde{X};\hat{\theta}) - G_{jk,l}(\tilde{X};\theta_0))|$
     \item[(1.2)] $\max_{j \in [J],k \in [K],l \in [2L]}|\mathbb{G}_n G_{jk,l}(\tilde{X};\theta_0))|$
     \item[(1.3)] $\max_{j \in [J],k \in [K],l \in [2L]} n^{1/2}|E[G_{jk,l}(\tilde{X};\hat{\theta}) - G_{jk,l}(\tilde{X};\theta_0)]|$
 \end{enumerate} 
 
 From lemma C.1 (4) of Belloni et al. (2018), the second term, (1.2), is bounded by
 \begin{align*}
  \max_{j \in [J],k \in [K],l \in [2L]}|\mathbb{G}_n G_{jk,l}(\tilde{X};\theta_0))|  \leq & C \max_{j,k,l} \frac{1}{n}\sum_i E\left[|G_{jk,l}(\tilde{X}_i;\theta_0))|^2\right]\log^{1/2}(2JKL) \\
  &+ n^{-1/2}C_2 \left\{E\left[\max_{i\leq n}\|G(\tilde{X}_i;\theta_0))\|^2_{\infty}\right](\delta_n^{-1} + \delta_n^{-1/2} + log(2JKL)) \right\}\\
  \leq& C \log^{1/2}(2JKL)
 \end{align*}
 with probability $1 - \delta_n$ by the condition 5.
 
 For the last term, (1.3), we use the linear expansion of $G_{jk,l}(\tilde{X};\theta)$ into $\sum_{j' \in [J],g \in [G]} h_{jk}(W)B_{j',g}(X_l)(\nu_{j'g} - \nu_{j'g0})$ from lemma \ref{lemma.LipGrad} so that
 \begin{align*}
    &\max_{j \in [J],k \in [K],l \in [2L]} n^{1/2}\left|E[G_{jk,l}(\tilde{X},\hat{\theta}) - G_{jk,l}(\tilde{X},\theta_0)]\right|\\
    &\leq_{(1)} \max_{j \in [J],k \in [K],l \in [2L]} n^{1/2}\left|E\left[\sum_{j' \in [J],g \in [G]} h_{jk}(W)B_{j',g}(X_l)(\nu_{j'g} - \nu_{j'g0})\right]\right|\\
    &\leq_{(2)} \max_{j \in [J],k \in [K],l \in [2L]} n^{1/2}\left|E\left[\sum_{j' \in [J],g \in [G]} h_{jk}(W)B_{j',g}(X_l)X_{j'g}'(\theta_{g} - \theta_{g0})\right]\right|\\    
    &\leq_{(3)} \max_{j \in [J],k \in [K],l \in [2L]} n^{1/2}\left|E\left[\sum_{j' \in [J],l' \in [L]} [h_{jk}(W)B_{j',g(l')}(X_l) X_{j',l'} (\theta_{l'} - \theta_{l'0})]^2\right]\right|^{1/2}\\
&\leq_{(4)} \max_{j \in [J],k \in [K],l \in [2L]} n^{1/2}\left|E\left[h^2_{jk}(W) \sum_{j' \in [J],l' \in [L]} \max_{j' \in [J],l' \in [L]}[B^2_{j',g(l')}(X_l) X^2_{j',l'}] (\theta_{l'} - \theta_{l'0})^2\right]\right|^{1/2}\\
&\leq_{(5)} \max_{j \in [J],k \in [K],l \in [2L]} n^{1/2}\left|E\left[h^2_{jk}(W) J^2 \bar{C}^2 \max_{j'' \in [J]} X_{j'',l}^2 \max_{j' \in [J],l' \in [L]} X^2_{j',l'} \sum_{j' \in [J],l' \in [L]} |\theta_{l'} - \theta_{l'0}|^2\right]\right|^{1/2}\\
&\leq_{(6)} \max_{j \in [J],k \in [K],l \in [2L]} n^{1/2}\bar{C} J^{3/2} \sqrt{E\left[h^2_{jk}(W) \max_{j'' \in [J]} X_{j'',l}^2 \max_{j' \in [J],l' \in [L]} X^2_{j',l'}\right]} \|\theta - \theta_{0}\|_2 \leq C n^{1/2} J^{3/2} \Delta_{2n}
\end{align*}
with probability $1 - \delta_n$, where (1) and (5) follows from lemma \ref{lemma.LipGrad}, (2) follows from the definition of $\nu_{jm}$, (3) follows from the monotonicity of the $L_p$ norm, (4) follows from the union bound, and (6) follows from the condition 1.
 
 The first term, (1.1), is the empirical process in terms of the $G_{jk,l}$ instead of $f_{jk}$. Note that
 \begin{align*}
  \mathbb{E}_n& Var(G_{jk,l}(\tilde{X};\hat{\theta}) - G_{jk,l}(\tilde{X};\theta_0)) \leq \mathbb{E}_n \left(\sum_{j' \in [J],l \in [L]}h_{jk}(W)B_{j',g(l)}X_{j',l}(\theta_l - \theta_{l0})\right)^2\\
  \leq & \mathbb{E}_n \left[ h_{jk}^2(W) \max_{j' \in [J],l \in [L]} B_{j',g(l)}^2 X^2_{j',l}\right] \|\theta - \theta_0\|_1^2\\
  \leq & C J^2 \mathbb{E}_n \left[h_{jk}^2(W) \max_{j'' \in [J]}  X^2_{j'',l} \max_{j' \in [J],l \in [L]} X^2_{j',l}\right] \|\theta - \theta_0\|_1^2 \leq C J^2 B^2_n \Delta_{1n}^2
 \end{align*}
 by H\"older inequality and lemma \ref{lemma.LipGrad}.
 
 Then the conditions for the Corollary \ref{crr.contraction2} holds with $B_{1n} = J B_n \Delta_{1n}$, and $B_{2n} = B_n$ so that
 \[
  \max_{j \in [J],k \in [K],l \in [2L]}|\mathbb{G}_n (G_{jk,l}(\tilde{X};\hat{\theta}) - G_{jk,l}(\tilde{X};\theta_0))| \leq J^2 G  C B_n\Delta_{1n} \log^{1/2}(2J^2 G KL/\delta_n)
 \]
 with probability $1 - 7\delta_n$.
 
 In the same argument of Belloni et al (2018), $\|\hat{G} - \tilde{G}\|_{\infty}$ has the same bound as $\|\hat{G} - G\|_{\infty}$. Finally, we consider $\|\hat{\Omega} - \Omega\|$. As we do for $\|\hat{G} - \tilde{G}\|_{\infty}$, $n^{-1/2}\|\hat{\Omega} - \Omega\|$ is bounded by the sum of the following three terms
 \begin{enumerate}
     \item[(2.1)] $\max_{j,j' \in [J],k,k' \in [K]}|\mathbb{G}_n (f_{jk}(\tilde{X};\hat{\theta})f_{j'k'}(\tilde{X};\hat{\theta}) - f_{jk}(\tilde{X};\theta_0)f_{j'k'}(\tilde{X};\theta_0)|$
     \item[(2.2)] $\max_{j,j' \in [J],k,k' \in [K]}|\mathbb{G}_n f_{jk}(\tilde{X};\theta_0)f_{j'k'}(\tilde{X};\theta_0)|$
     \item[(2.3)] $\max_{j,j' \in [J],k,k' \in [K]} n^{1/2}|E[f_{jk}(\tilde{X};\hat{\theta})f_{j'k'}(\tilde{X};\hat{\theta}) - f_{jk}(\tilde{X};\theta_0)f_{j'k'}(\tilde{X};\theta_0)]|$.
 \end{enumerate} 
 
 The first term, (2.1), is further bounded by the sum of two terms
 \begin{enumerate}
     \item[(2.1.1)] \[
            \max_{j,j' \in [J],k,k' \in [K]}|\mathbb{G}_n (f_{jk}(\tilde{X};\hat{\theta}) - f_{jk}(\tilde{X};\theta_0))(f_{j'k'}(\tilde{X};\hat{\theta}) - f_{j'k'}(\tilde{X};\theta_0))|     
           \]
     \item[(2.1.2)] \[
            2\max_{j,j' \in [J],k,k' \in [K]}|\mathbb{G}_n ((f_{jk}(\tilde{X};\hat{\theta}) -  f_{jk}(\tilde{X};\theta_0))f_{j'k'}(\tilde{X};\theta_0)|.
           \]
 \end{enumerate}  
 First, we have for (2.1.1),
 \begin{align*}
     &\max_{j,j' \in [J],k,k' \in [K]}|\mathbb{G}_n (f_{jk}(\tilde{X};\hat{\theta}) - f_{jk}(\tilde{X};\theta_0))(f_{j'k'}(\tilde{X};\hat{\theta}) - f_{j'k'}(\tilde{X};\theta_0))| \\
     &\leq  \max_{j,j' \in [J],k,k' \in [K]}n^{1/2}\mathbb{E}_n[(f_{jk}(\tilde{X};\hat{\theta}) - f_{jk}(\tilde{X};\theta_0))^2]^{1/2} \mathbb{E}_n[(f_{j'k'}(\tilde{X};\hat{\theta}) - f_{j'k'}(\tilde{X};\theta_0))^2]^{1/2} \\ 
     & + \max_{j,j' \in [J],k,k' \in [K]}n^{1/2}E[(f_{jk}(\tilde{X};\hat{\theta}) - f_{jk}(\tilde{X};\theta_0))^2]^{1/2} E[(f_{j'k'}(\tilde{X};\hat{\theta}) - f_{j'k'}(\tilde{X};\theta_0))^2]^{1/2} \\ 
     &\leq  \max_{j,j' \in [J],k,k' \in [K]}n^{1/2}C_1^2 J^2\mathbb{E}_n\left[ h_{jk}^2(W)\left(\sum_{j'' \in [J],l \in [L]}X_{j'',l}(\hat{\theta}_l - \theta_{l0})\right)^2\right]^{1/2}\mathbb{E}_n\left[ h_{j'k'}^2(W) \left(\sum_{j''\in [J],l \in [L]}X_{j'',l}(\hat{\theta}_l - \theta_{l0})\right)^2\right]^{1/2} \\ 
     & + \max_{j,j' \in [J],k,k' \in [K]}n^{1/2}C_1^2  J^2E\left[ h_{jk}^2(W) \left(\sum_{j'' \in [J],l \in [L]}X_{j'',l}(\hat{\theta}_l - \theta_{l0})\right)^2\right]^{1/2}E\left[ h_{j'k'}^2(W) \left(\sum_{j'' \in [J],l \in [L]}X_{j'',l}(\hat{\theta}_l - \theta_{l0})\right)^2\right]^{1/2} \\ 
     & \leq n^{1/2} C B_n^2 J^{3} \Delta_{2n}^2
 \end{align*}
 with probability $1 - C\delta_n$ and a universal constant $\bar{C}$, where the second and the last inequalities are by H\"older inequality and lemma \ref{lemma.Lip}.
 
 For the second term, (2.1.2), note that
 \begin{align*}
  &Var(\mathbb{G}_n([f_{j,k}(\tilde{X};\theta) - f_{j,k}(\tilde{X};\theta_0)]f_{j',k'}(X,\theta_0))) \leq Var(\mathbb{G}_n(h_{jk}(W)h_{j'k'}(W)\xi_{j'}(\tilde{X};\theta_0)X_{j}'(\theta - \theta_0))) \\
  &\leq \bar{C} J^2 B_n^2 \Delta_{1n}^2
 \end{align*}
 so that the conditions for corollary \ref{crr.contraction} holds with $B_{1n} = JB_n \Delta_{1n}$ and $B_{2n} = B_n$, therefore, 
 \[
  \max_{j,j' \in [J],k,k' \in [K]}|\mathbb{G}_n ((f_{jk}(\tilde{X};\hat{\theta}) - f_{jk}(\tilde{X};\theta_0))f_{j'k'}(\tilde{X};\theta_0)| \leq \bar{C}_1 J^2 G B_n \Delta_{1n} \log^{1/2}(2J^2GKL/\delta_n).
 \]
 
 For the remaining two terms, (2.2),
 \[
 \max_{j,j' \in [J],k,k' \in [K]}|\mathbb{G}_n f_{jk}(\tilde{X};\theta_0)f_{j'k'}(\tilde{X};\theta_0)|
 \]
 is bounded by 
 \[
  C\max_{j,k} E[f_{j,k}^4(\tilde{X};\theta_0)]^{1/2} \sqrt{\log(JK)} + Cn^{-1/2}E[\max_{i}\|f(\tilde{X}_i;\theta_0)\|^4_{\infty}]\{\delta_n^{-1} + \log(JK)\} \leq C' \sqrt{\log(JK)}
 \]
 from Lemma C.1(4) of Belloni et al. (2018) under condition 4.
 
 Finally, for (2.3),
 \begin{align*}
 &\max_{j,j' \in [J],k,k' \in [K]} n^{1/2}|E[f_{jk}(\tilde{X};\hat{\theta})f_{j'k'}(\tilde{X};\hat{\theta}) - f_{jk}(\tilde{X};\theta_0)f_{j'k'}(\tilde{X};\theta_0)]|\\
 \leq & \max_{j,j' \in [J],k,k' \in [K]} n^{1/2}|E[f_{jk}(\tilde{X};\hat{\theta})(f_{j'k'}(\tilde{X};\hat{\theta}) - f_{j'k'}(\tilde{X};\theta_0))]| + |E[(f_{jk}(\tilde{X};\hat{\theta}) - f_{jk}(\tilde{X};\theta_0))f_{j'k'}(\tilde{X};\theta_0)]|\\
\leq & \max_{j,j' \in [J],k,k' \in [K]} n^{1/2}|E[(f_{jk}(\tilde{X};\hat{\theta}) - f_{jk}(\tilde{X};\theta_0))(f_{j'k'}(\tilde{X};\hat{\theta}) - f_{j'k'}(\tilde{X};\theta_0))]| \\
&+ 2 n^{1/2} \max_{j,j' \in [J],k,k' \in [K]} |E[(f_{jk}(\tilde{X};\hat{\theta}) - f_{jk}(\tilde{X};\theta_0))f_{j'k'}(\tilde{X};\theta_0)]|\\ 
&\leq J^{3} C \Delta_{2n}^2 + n^{1/2} 2 \max_{j,j' \in [J],k,k' \in [K]} E\left[\bar{C}^2J^2 h^2_{jk}(W)h^2_{j'k'}(W)\xi^2_{j'}(\tilde{X};\theta_0)\left(\sum_{j' \in [J],g \in [G]} X'_{j',g}(\theta_g - \theta_{g0})\right)^2\right]^{1/2}\\
&\leq n^{1/2} J^{3} C \Delta_{2n}^2 + n^{1/2} 2 \max_{j,j' \in [J],k,k' \in [K]} E\left[\bar{C}^2J^2 h^2_{jk}(W)h^2_{j'k'}(W)\xi^2_{j'}(\tilde{X};\theta_0) \max_{j' \in [J],l' \in [L]}X_{j',l'}^2 \sum_{j' \in [J],l' \in [L]}(\theta_l - \theta_{l0})^2\right]^{1/2}\\
&\leq n^{1/2} J^{3} C \Delta_{2n}^2 + n^{1/2} 2 J^{3/2} C \Delta_{2n} \\
 \end{align*}
 by lemma \ref{lemma.Lip} and H\"older inequality.
\end{proof}

Combining these results, we attain the asymptotic linearlity
\begin{theorem}
 Suppose that
 \[
 \max_{j \in [JK]}E[f^2_{j}(\tilde{X};\theta_0)] \leq C
 \]
 and
 \[
 n^{-1/2}E\left[\max_{i \in [n]}\|f(\tilde{X}_i;\theta_0)\|^2_{\infty}\right] \leq \min\{\delta_n, \log^{-1/2}(JKL)\}.
 \]
 
 Suppose that $\|\hat{f}(\theta_0)\|_{\infty} \leq \lambda$ with probability at least $1 - \alpha$, and assumptions for lemmas 1, 2 and 3, and theorem \ref{thm.tailbound} with
 \[
   B_n + \bar{C} + \mu_n^{-1} \leq C.
 \]
 For $\bar{a} \geq 0$ and $C' \geq 1$, let 
 \[
  \bar{\lambda} = C'J^{3/2}\max\{J^{3/2}\tilde{\lambda}^2,\tilde{\lambda}\}
 \]
 with $\tilde{\lambda} \equiv n^{-1/2+\bar{a}}J^2 G \Phi^{-1}(1 - (2J^2GKLn)^{-1})$.
 
 Then, setting $\lambda_l^{\gamma} = \frac{1}{2}\lambda_{l}^{\mu} = \bar{\lambda}$, we have with probability $1 - \alpha - C\delta_n$,
 \[
  \sqrt{n}(\hat{\hat{\theta}} - \theta_0) = -\mu_0 \gamma_0 \hat{f}(\theta_0) + r
 \]
 with $\|r\|_{\infty} \leq C u_n$ where $\hat{\hat{\theta}}$ is the updated RGMM estimator provided that
 \[
  n^{-1+2\bar{a}} s J^{3/2} \max\left\{n^{-1/2}J^{3/2}J^4 G^2 \log^2(2J^2GKLn),J^2 G \log(2J^2GKLn) \right\} \leq u_n
 \]
 and
 \[
  \bar{\lambda} \geq C J^{3/2} \max\{J^{3/2}D_{2n}^2,D_{2n}\}
 \]
 for some large enough $C > 0$ where $D_{2n} \equiv n^{-1/2} s^{1/2}(\log^{1/2}(2JKL) + J^2 G \log^{1/2}(2J^2GKL/\delta_n))$.
\end{theorem}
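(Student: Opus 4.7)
The plan is to adapt the argument of Theorem 3.10 of \cite{belloniHighdimensionalEconometricsRegularized2018}, substituting at each step the multi-index bounds developed in this paper. I would begin from the update rule $\hat{\hat{\theta}} = \hat{\theta} - \hat{\mu}\hat{\gamma}\hat{f}(\hat{\theta})$ and apply a mean-value expansion of the empirical moment, $\hat{f}(\hat{\theta}) = \hat{f}(\theta_0) + \tilde{G}(\hat{\theta} - \theta_0)$, where $\tilde{G}$ is the Jacobian evaluated at the intermediate $\tilde{\theta}$ already used in the preceding lemma. Substituting and adding and subtracting $\mu_0\gamma_0\hat{f}(\theta_0)$ yields the decomposition
\begin{equation*}
\sqrt{n}(\hat{\hat{\theta}} - \theta_0) + \mu_0\gamma_0\sqrt{n}\hat{f}(\theta_0) = \sqrt{n}(I - \hat{\mu}\hat{\gamma}\tilde{G})(\hat{\theta} - \theta_0) - \sqrt{n}(\hat{\mu}\hat{\gamma} - \mu_0\gamma_0)\hat{f}(\theta_0),
\end{equation*}
so $r$ is the sum of these two pieces, and each must be bounded in $\ell_\infty$-norm by $Cu_n$ on the event where all the high-probability bounds of the preceding lemmas and of Theorem \ref{thm.tailbound} are in force.

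For the first piece, I would use the elementary inequality $|(Av)_i| \leq \max_{i,j}|A_{ij}|\cdot\|v\|_1$ and split $I - \hat{\mu}\hat{\gamma}\tilde{G} = (I - \hat{\mu}\hat{\gamma}\hat{G}) + \hat{\mu}\hat{\gamma}(\hat{G} - \tilde{G})$. The entrywise max norm of the first summand is at most $\lambda_j^{\mu} = 2\bar{\lambda}$ by the definition of $\hat{\mu}$ in step 4 of the debiasing procedure. The second summand is handled through the $\|\hat{G}-\tilde{G}\|_\infty$ bound from the preceding lemma, combined with row-wise $\ell_1$ bounds on $\hat{\mu}\hat{\gamma}$ obtained from $\|\hat{\mu}_j\|_1 \leq \|\hat{\mu}_j - \mu_{0j}\|_1 + \|\mu_{0j}\|_1$ and the analogous bound on $\hat{\gamma}$, both supplied by Lemma 2 and the assumed $\bar{C}$-bound. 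Multiplying by $\|\hat{\theta} - \theta_0\|_1 \leq \Delta_{1n}$ from the RGMM rate theorem and by $\sqrt{n}$ yields a contribution of order $\sqrt{n}\bar{\lambda}\Delta_{1n}$, which falls inside the stated $u_n$ budget.

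For the second piece, I would factor row-by-row $\hat{\mu}_j\hat{\gamma} - \mu_{0j}\gamma_0 = (\hat{\mu}_j - \mu_{0j})\hat{\gamma} + \mu_{0j}(\hat{\gamma} - \gamma_0)$ and use the bound $|[(\hat{\mu}\hat{\gamma} - \mu_0\gamma_0)\hat{f}(\theta_0)]_j| \leq \|\hat{\mu}_j\hat{\gamma} - \mu_{0j}\gamma_0\|_1 \|\hat{f}(\theta_0)\|_\infty$. The $\ell_1$-norm is controlled by Lemma 2 at order $s\bar{\lambda}/\mu_n$ up to constants, and $\|\hat{f}(\theta_0)\|_\infty \leq \lambda \leq n^{-1/2}l_n$ on the regularization event of probability $1 - \alpha$. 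This gives a contribution of order $\sqrt{n}\,s\bar{\lambda}\lambda/\mu_n$, again inside the $u_n$ budget under the prescribed relationships.

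Finally, I would substitute the explicit rates: Theorem 2 bounds $\Delta_{qn}$ by a multiple of $D_{2n}$ under moment conditions 4--5 (through Lemma C.1 of \cite{belloniHighdimensionalEconometricsRegularized2018}); the bounds on $\|\hat{G} - G\|_\infty, \|\hat{G} - \tilde{G}\|_\infty, \|\hat{\Omega} - \Omega\|_\infty$ from the preceding lemma supply the thresholds $l_n^G$ and $l_n^\Omega$ required in Lemma 2; and the stated choice $\lambda_l^\gamma = \tfrac{1}{2}\lambda_j^\mu = \bar{\lambda}$ is calibrated precisely so that each Lemma 2 condition holds simultaneously. A union bound over the six high-probability events gives the claim with probability $1 - \alpha - C\delta_n$. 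I expect the main obstacle to be bookkeeping: at each of the preceding steps the bounds carry multiplicative factors of $J^{3/2}$, $J^2G$, and $B_n$ coming from the multi-index contraction inequality of Theorem \ref{thm.tailbound} rather than the single-index setup of \cite{belloniHighdimensionalEconometricsRegularized2018}, and verifying that every cross term is dominated by $u_n$ is what forces the specific form of $\bar{\lambda}$ and the lower bound on $u_n$ in the hypothesis.
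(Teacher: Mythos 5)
Your proposal is correct and takes essentially the same route as the paper: the paper obtains your decomposition by invoking Lemma 3.6 of Belloni et al.\ (2018), whose remainder terms $\bar r_1,\bar r_2,\bar r_3$ correspond precisely to your $(I-\hat\mu\hat\gamma\hat G)(\hat\theta-\theta_0)$, $\hat\mu\hat\gamma(\hat G-\tilde G)(\hat\theta-\theta_0)$, and $(\hat\mu\hat\gamma-\mu_0\gamma_0)\hat f(\theta_0)$ pieces, and then bounds them exactly as you do via Lemma 2, Lemma 3, the RGMM rate, and the substitution of $D_{2n}$ into $\bar\lambda$ and $u_n$. The only difference is presentational --- you derive the decomposition explicitly rather than citing it --- so no gap.
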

\begin{proof}
 First note that the rate terms of theorem \ref{thm.tailbound} satisfies the following
 \[
  l_n \leq  C'\log^{1/2}(2JKL)
 \]
 and
 \[
  l'_n \leq C' J^2 G \log^{1/2}(2J^2 K G L) 
 \]
 with probability $1 - \delta_n$ under the assumptions.
 
 Therefore,
 \[
  \Delta_{qn} \leq C' n^{-1/2} s^{1/q} (l_n + l'_n) \leq C'' n^{-1/2} s^{1/q}  (\log^{1/2}(2JKL) + J^2 G\log^{1/2}(2J^2 K G L)) \equiv D_{qn}.
 \]
 with probability at least $1 - \alpha - \delta_n$.
 
 To apply lemma 2, note that for $s \geq 1$, we have
 \[
  n^{-1/2} \log^{1/2}(2JKL) \leq n^{-1/2} s^{1/q} \log(2JKL) \leq D_{qn}.
 \]
 Note also that
 \[
  D_{1n}n^{-1/2}J^2 G \log^{1/2}(2J^2 GKL/\delta_n) \leq D_{2n}^2.
 \]
 Therefore, by lemma 3, we have
 \[
  \max\{\|\hat{G} - G\|_{\infty}, \|\hat{G} - \tilde{G}\|_{\infty}, \|\hat{\Omega} - \Omega\|_{\infty}\} \leq C J^{3/2}(J^{3/2}D_{2n}^2 + D_{2n}).
 \]
 with probability $1 - C\delta_n$.
 
 Now, let
 \[
  l_n^{\Omega} = l_n^{G} = n^{1/2} C J^{3/2} \max\{J^{3/2}D_{2n}^2,D_{2n}\}
 \]
 and
 \[
  \lambda_{l}^{\gamma} = \frac{1}{2}\lambda_{j}^{\mu} = \bar{\lambda}
 \]
 so that
 \[
  n^{1/2}\bar{\lambda} \geq (\bar{C} + 1)n^{1/2} C J^{3/2}(J^{3/2}D_{2n}^2 + D_{2n})
 \]
 and
 \[
  \bar{\lambda} \leq n^{-1/2}l_n = \bar{\lambda}.
 \]
 
 Then lemma 2 applies to get
 \[
  \max_{l \in [p]}\|\hat{\gamma}_{l} - \gamma_{0l}\|_{1} \leq C_1 s \bar{\lambda}
 \]
 and similar definition for $l'_n$ gives
  \[
  \max_{j \in [JK]}\|\hat{\mu}_{j} - \mu_{0j}\|_{1} \leq C_2 s \bar{\lambda}.
 \]
 
 By lemma 3.6 of Belloni et al. (2018), the decomposed error rates defined in the lemma, $\bar{r}_1, \bar{r}_2$ and $\bar{r}_3$ are bounded by
 \[
  \bar{r}_1 \leq \sqrt{n} \bar{\lambda} \Delta_{1n}
 \]
 \[
 \bar{r}_2 \leq C n^{1/2} (\Delta_{2n}^2 + J^2 \Delta_{2n}) \Delta_{1n}
 \]
 \[
  \bar{r}_3 \leq C n^{-1/2} \bar{\lambda} \Delta_{1n}.
 \]
 Thus,
 \[
  \|r\| \leq C u_n
 \]
 for $u_n$ such that $n^{-1/2 + \bar{a}} s J^{3/2} \max\{n^{-1/2}J^{3/2} J^4 G^2 \log^{3/2}(2J^2 GJKLn),J^2 G \log(2J^2 GJKLn)\} \leq u_n$ with probability $1 - \alpha - C\delta_n$.
\end{proof}

\section{Conclusion}

In this paper, we propose a $l_1$-penalized estimation for random coefficient logit model of differentiated product demands. Unlike the existing approach, our procedure allows for random coefficients on possibly high-dimensional attributes. Therefore, both of the mean and variance of the indirect utilities may be determined by high-dimensional but sparse set of attributes. 

Our strategy bases on the contraction inequality by \cite{ledouxProbabilityBanachSpaces1991} as is used in \cite{belloniHighdimensionalEconometricsRegularized2018} for the GMM procedure with a single index. We show that the $l_1$-regularized GMM estimation and its de-biased procedure are valid for the BLP model with fixed number of indices generated out of high-dimensional but sparse set of attributes. 

Unfortunately, the contraction inequality principle does not apply to a fully flexible random coefficient BLP model as the number of indices grows in the same rate as the number of the attributes. Also, our current result does not accommodate the models with the number of products growing exponentially. These challenges are left for the future work.

\bibliographystyle{ecta}
\bibliography{references} 
 







\begin{appendix}
\section{Supporting Lemmas}

We consider the bound for the linear expansion of $\xi_j$ with respect to the index $x_{j'g}'\gamma_g$ when the share of each product for any $\tilde{\beta}$ fall in the shrinking range of $c_1/J$ and $c_2/J$. This is the same assumption employed in Berry, Linton, and Pakes (2004). All these arguments should apply to the special case that the number of product $J$ is a fixed constant when every product has non-zero share in every market.
\begin{lemma} \label{lemma.Lip}
 Let
\[
 s_j(\tilde{\beta};\tilde{x},\theta) \equiv \frac{\exp(x_j'\beta + \xi_j(\tilde{x};\theta) + \sum_{g \in [G]} x_{jg}'\gamma_g \tilde{\beta}_g)}{1 + \sum_{j' \in [J]}\exp(x_{j'}'\beta + \xi_{j'}(\tilde{x};\theta) + \sum_{g \in [G]} x_{j'g}'\gamma_g \tilde{\beta}_g)}.
\]

Assume that
 \[
  \frac{c_1}{J} < s_j(\tilde{\beta};\tilde{x},\theta) < \frac{c_2}{J}
 \]
 for almost every $\tilde{\beta} \in \mathbb{R}^G$ and $x$, and for every $j \in \{0,1,\ldots,J\}$.
 
 Then for any pair of index values $\nu_{j'g}$ and $\nu_{j'g0}$ for each $g \in [G], j' \in [J]$, we have
\[
 \left|\frac{d \xi_j(\tilde{\nu};\tilde{x})}{d\nu_{j'g}}(\nu_{j'g} - \nu_{j'g0})\right| \leq 
 C_1 J \left| \nu_{j'g} - \nu_{j'g0} \right|
\]
with a universal constant $C_1$ uniformly over $\tilde{\nu}$ and $\tilde{x}$.
\end{lemma}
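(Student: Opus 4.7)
The plan is to obtain the derivative bound via the implicit function theorem applied to the share inversion, then separately bound the two factors that appear using the uniform share assumption $c_1/J \leq s_j \leq c_2/J$. In index notation, letting $\delta_j \equiv \nu_{j0} + \xi_j$ denote the mean utility, the identity $\int s_j(\tilde{\beta};\tilde{x},\theta)dF_{\tilde\beta} = S_j$ for all $j \in [J]$ implicitly defines $\xi = (\xi_1,\ldots,\xi_J)$ as a function of $(\nu_{j'g})_{j' \in [J], g \in [G]}$ for fixed $\tilde{x}$ and $S$. By the implicit function theorem, $\partial\xi/\partial\nu_{j'g} = -T^{-1} v^{(j'g)}$, where $T$ is the Jacobian of the integrated share vector with respect to $\delta$, and $v^{(j'g)}$ is the $J$-vector whose $k$-th entry is $\partial \bar s_k /\partial \nu_{j'g}$. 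It therefore suffices to show $\|T^{-1}\|_\infty \leq C J^2$ and $\|v^{(j'g)}\|_\infty \leq C/J$.

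For the second factor, the standard multinomial-logit calculation gives
\[
 v^{(j'g)}_k = \int \tilde\beta_g\, s_k(\tilde\beta;\tilde x,\theta)\bigl(\mathbf{1}\{k=j'\} - s_{j'}(\tilde\beta;\tilde x,\theta)\bigr) dF_{\tilde\beta}.
\]
Using $s_k \leq c_2/J$ pointwise and $|\mathbf{1}\{k=j'\} - s_{j'}| \leq 1$, we get $\|v^{(j'g)}\|_\infty \leq (c_2/J) E|\tilde\beta_g|$, which is $O(1/J)$ since $\tilde\beta_g$ is standard normal.

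The harder step is bounding $\|T^{-1}\|_\infty$. For this I would invoke Varah's bound for strictly diagonally dominant matrices, $\|T^{-1}\|_\infty \leq 1/\min_j \alpha_j$ with $\alpha_j \equiv T_{jj} - \sum_{j' \neq j} |T_{jj'}|$. Plugging in the logit expressions $T_{jj} = \int s_j(1-s_j)dF$ and $T_{jj'} = -\int s_j s_{j'} dF$ for $j \neq j'$, one obtains after rearrangement
\[
 \alpha_j = \int s_j\Bigl(1 - \sum_{j''=1}^{J} s_{j''}\Bigr) dF_{\tilde\beta} = \int s_j\, s_0\, dF_{\tilde\beta},
\]
where $s_0 = 1 - \sum_{j''=1}^J s_{j''}$ is the outside share. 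Because the uniform share bound is assumed for every $j \in \{0,1,\ldots,J\}$, the pointwise inequalities $s_j \geq c_1/J$ and $s_0 \geq c_1/J$ yield $\alpha_j \geq c_1^2/J^2$. Consequently $\|T^{-1}\|_\infty \leq J^2/c_1^2$.

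Combining the two pieces gives $|\partial \xi_j/\partial \nu_{j'g}| \leq \|T^{-1}\|_\infty \cdot \|v^{(j'g)}\|_\infty \leq C_1 J$ uniformly in $\tilde\nu$ and $\tilde x$; multiplying by $|\nu_{j'g}-\nu_{j'g0}|$ delivers the claim. The main obstacle is the sharp diagonal-dominance calculation $\alpha_j = \int s_j s_0 dF$, which is where both the upper share bound (controlling off-diagonals) and the lower share bound on the outside good (providing the dominance gap) are simultaneously used; without the pointwise nature of the assumption, this identity would need to be replaced by a weaker integrated version and the $J^2$ rate might degrade.
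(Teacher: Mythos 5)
Your proof is correct and recovers the same $O(J)$ Lipschitz constant, but the decisive step---controlling the inverse Jacobian $T^{-1}=\left[\partial s/\partial\xi\right]^{-1}$---is handled by a genuinely different route than the paper's. The paper keeps the product $T^{-1}v^{(j'g)}$ together and pairs an \emph{elementwise} bound $|d_{jj''}|\leq 2J/c_1$, obtained from the Berry--Linton--Pakes (2004) positive-definite comparison $D\preceq \mathrm{diag}(\underline{s}_1,\ldots,\underline{s}_J)^{-1}+ii'/\underline{s}_0$, with the observation that $\sum_{j''}\int|\tilde\beta_g|\,s_{j''}(1-s_{j'})\,dF\leq E|\tilde\beta_g|$; that is, the row of $T^{-1}$ is $O(J)$ entrywise while the vector it multiplies has $\ell_1$-norm $O(1)$. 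You use the dual H\"older pairing: the induced $\ell_\infty$ operator norm $\|T^{-1}\|_\infty\leq J^2/c_1^2$ via the Ahlberg--Nilson--Varah bound (your identity $\alpha_j=\int s_j s_0\,dF\geq c_1^2/J^2$ is exactly right, and the hypothesis does supply the needed lower bound for the outside good $j=0$), against $\|v^{(j'g)}\|_\infty\leq (c_2/J)\,E|\tilde\beta_g|$ from the upper share bound. Notably, the Varah bound appears in a commented-out appendix lemma of the paper, where the authors set it aside because, without the $c_1/J$, $c_2/J$ rate normalization, the dominance gap gives no control as $J$ grows; your quantified version shows the approach does work once the $J^{2}$ blow-up of $\|T^{-1}\|_\infty$ is offset by the $J^{-1}$ decay of $\|v^{(j'g)}\|_\infty$. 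The only substantive difference in hypotheses used is that your argument needs the upper bound $s_k\leq c_2/J$ for the $\|v^{(j'g)}\|_\infty$ factor, whereas the paper's chain for this lemma relies only on the lower share bounds together with $\sum_{j''}s_{j''}=1-s_0\leq 1$; both arguments are sound and yield the same order of constant.
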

\begin{proof}
First note that
\[
 \frac{d\xi(\tilde{x};\theta)}{dx_{j'g}'\gamma_g} = \left[\frac{\partial s(\tilde{x};\theta)}{\partial \xi} \right]^{-1} \left[\frac{\partial s(\tilde{x};\theta)}{\partial x_{j'g}' \gamma_g} \right] = \left[\frac{\partial s(\tilde{x};\theta)}{\partial \xi} \right]^{-1}\int \tilde{\beta}_g s(\tilde{\beta};\tilde{x},\theta) (1 - s_{j'}(\tilde{\beta};\tilde{x},\theta)) dF_{\tilde{\beta}},
\]
where $s(\tilde{\beta};\tilde{x},\theta)$ is a vector of $s_j(\tilde{\beta};\tilde{x},\theta)$.

Then,
\begin{align*}
 RHS \leq& \left| \sum_{j'' \in [J]} d_{jj''}(\theta) \int\tilde{\beta}_g s_{j''}(\tilde{\beta};\tilde{x},\theta) (1 - s_{j'}(\tilde{\beta};\tilde{x},\theta)) dF_{\tilde{\beta}} (\nu_{j'g} - \nu_{j'g0}) \right| 
\end{align*}
where $d_{jj''}(\theta)$ is $(j,j'')$  element of $\left[\frac{\partial s(\tilde{x};\theta)}{\partial \xi} \right]^{-1} \equiv D(\theta)$ matrix.

For the inverse matrix elements $d_{jj''}(\theta)$, \citet[p.657]{berryLimitTheoremsEstimating2004} show that the upper bound of the $D$ matrix in the positive definite sense, i.e., 
\[
 x'\left(diag(\underline{s}_1,\ldots,\underline{s}_J)^{-1} + \frac{ii'}{\underline{s}_0} - D(\theta)\right)x > 0
\]
for any non-zero vector $x$ where $\underline{s}_j$ are the lower bounds of the shares satisfying the rate condition in the assumption. Thus, each element of $|d_{jj''}(\theta)|$ is bounded above by $\frac{1}{\underline{s_j}} + \frac{1}{\underline{s_0}} \leq 2J/c_1$.

Now,
\begin{align*}
 &\left| \sum_{j'' \in [J]} d_{jj''}(\theta) \int\tilde{\beta}_g s_{j''}(\tilde{\beta};\tilde{x},\theta) (1 - s_{j'}(\tilde{\beta};\tilde{x},\theta)) dF_{\tilde{\beta}} (\nu_{j'g} - \nu_{j'g0}) \right| \\
 & \leq \left| \sum_{j'' \in [J]} |d_{jj''}(\theta)| \int|\tilde{\beta}_g| s_{j''}(\tilde{\beta};\tilde{x},\theta) (1 - s_{j'}(\tilde{\beta};\tilde{x},\theta)) dF_{\tilde{\beta}} \right| |\nu_{j'g} - \nu_{j'g0}| \\
 & \leq  \left| \frac{2J}{c_1} \sum_{j'' \in [J]} \int|\tilde{\beta}_g| s_{j''}(\tilde{\beta};\tilde{x},\theta) dF_{\tilde{\beta}} \right| |\nu_{j'g} - \nu_{j'g0}| \\ 
& \leq  \left| \frac{2J}{c_1} \int|\tilde{\beta}_g| (1 - s_{0}(\tilde{\beta};\tilde{x},\theta)) dF_{\tilde{\beta}} \right| |\nu_{j'g} - \nu_{j'g0}| \\  
& \leq  \left| \frac{2J}{c_1} \int|\tilde{\beta}_g| dF_{\tilde{\beta}} \right| |\nu_{j'g} - \nu_{j'g0}| \leq C_1 J |\nu_{j'g} - \nu_{j'g0}|.
\end{align*}
\end{proof}
\begin{remark}
 One may achieve $L_{\infty}$-Lipschitz result for the vector of indices $\{\nu_{jg}\}_{j \in [J]}$. The $L_{\infty}$-Lipschitz constant can be invariant to the number of products $J$, from the assumption that $\sum_{j} (1 - s_j(\tilde{\beta})) = s_0(\tilde{\beta}) \leq c_2/J$. From this property, the variance term of \cite{berryLimitTheoremsEstimating2004} achieves $J$ rate, instead of $J^2$. Therefore, the rate of convergence may be improved with respect to the number of products $J$ relative to the one in this paper. Nevertheless, Ladeau-Talagrand contraction inequality does not apply with the $L_{\infty}$-Lipschitz case. While there is a recent study by \cite{fosterInftyVectorContraction2019} showing a tail probability bound for the Rademacher average with $L_{\infty}$-Lipschitz mapping, it is not trivial to apply to our case.
\end{remark}

Next we show the sufficient conditions for the empirical process at the true parameter value $\theta_0$ is bounded by the $log(JK)$ rate.
\begin{lemma}
 Assume that there is some $\sigma > 0$ such that
 \[
 \max_{j \in [J], l \in [2L], k \in [K]} \mathbb{E}_n (X_{jl} h_{jk}(W))^2 \leq \sigma^2
 \]
and
 \[
 \frac{\log(JK n)}{\sqrt{n}} \left(E[\|h_{jk}(W)\|^4_{\infty} \xi(\tilde{X};\theta_0)^4 \right)^{1/2} \leq \sigma^2
 \]
 Then 
 \[
 \|n^{-1/2} \mathbb{G}_n(g(\tilde{X},\theta_0))\|_{\infty} \leq n^{-1/2} C \sigma \sqrt{\log(JK)}
 \]
  with probability at least $1 - \delta /\log^2(n)$.
\end{lemma}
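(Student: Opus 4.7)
The plan is to invoke a standard maximal inequality for a finite maximum of centered empirical coordinates --- concretely, Lemma C.1(4) of \cite{belloniHighdimensionalEconometricsRegularized2018}, which is the same tool already used in the paper to bound $\max_{j,k,l}|\mathbb{G}_n G_{jk,l}(\tilde X;\theta_0)|$ --- applied to the $JK$-dimensional vector $g(\tilde X;\theta_0)=f(\tilde X;\theta_0)$ whose coordinates are $f_{jk}(\tilde X;\theta_0)=\xi_j(\tilde X;\theta_0)h_{jk}(W)$. Because $E[\xi_{ij}\mid W_{ij}]=0$ these coordinates are centered, so the lemma produces a decomposition of the form
\[
\max_{j,k}\bigl|\mathbb{G}_n f_{jk}(\tilde X;\theta_0)\bigr| \leq C\max_{j,k}\sqrt{E[f_{jk}^2]}\sqrt{\log(JK)} + Cn^{-1/2}\bigl(E[\max_i\|f_i\|_\infty^4]\bigr)^{1/2}\bigl(\delta^{-1}+\log(JK)\bigr)
\]
with the prescribed tail $1-\delta/\log^2(n)$, and the remaining task is to match each of the two hypotheses to one of these two terms.

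For the variance term, I would use the first hypothesis $\max_{j,l,k}\mathbb{E}_n(X_{jl}h_{jk}(W))^2\leq\sigma^2$ to dominate $E[f_{jk}^2]=E[\xi_j^2h_{jk}^2]$. The bridge is the inverse-share representation $\xi_j(\tilde X;\theta_0)=s_j^{-1}(X,S;\theta_0)-X_j'\beta_0$ together with the bounded-share condition underlying Lemma \ref{lemma.Lip}: under that condition $|\xi_j(\tilde X;\theta_0)|$ is controlled by a Lipschitz constant multiple of $\max_l|X_{jl}|$, so that $E[f_{jk}^2]\leq C\max_l E[(X_{jl}h_{jk}(W))^2]\leq C\sigma^2$ and the first term in the decomposition is at most $C\sigma\sqrt{\log(JK)}$.

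For the envelope correction, I would use the pointwise inequality $\|f_i\|_\infty\leq\|h(W_i)\|_\infty\cdot|\xi(\tilde X_i;\theta_0)|$, which gives $(E[\max_i\|f_i\|_\infty^4])^{1/2}\leq(E[\|h(W)\|_\infty^4\xi(\tilde X;\theta_0)^4])^{1/2}$. The second hypothesis says exactly that $n^{-1/2}$ times this envelope, inflated by $\log(JKn)$, does not exceed $\sigma^2$; so, after absorbing the $(\delta^{-1}+\log(JK))$ factor of Lemma C.1(4) into the $\log(JKn)$ already accommodated by the hypothesis, the correction term is bounded by $C\sigma\sqrt{\log(JK)}$, of the same order as the variance term. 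Dividing by $\sqrt n$ gives the claimed $\ell_\infty$ bound.

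The main obstacle will be the first step, namely the translation from the variance hypothesis phrased in the $(j,l,k)$ indexing of attributes to the variance input $E[f_{jk}^2]=E[\xi_j^2h_{jk}^2]$ phrased in $(j,k)$: this genuinely requires the share-boundedness assumption behind Lemma \ref{lemma.Lip} and a mean-value argument to express the inverted residual $\xi_j(\tilde X;\theta_0)$ as a controlled linear combination of the $X_{jl}$. Once that translation is in place, the remainder is a direct plug-in into Lemma C.1(4), a union bound over the $JK$ coordinates, and a Markov/Chebyshev step to convert the expected-value control into the stated tail $1-\delta/\log^2(n)$.
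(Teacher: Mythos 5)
Your overall architecture (a finite-maximum maximal inequality applied to the $JK$ centered coordinates $f_{jk}(\tilde X;\theta_0)=\xi_j(\tilde X;\theta_0)h_{jk}(W)$, with one hypothesis feeding the variance term and the other the envelope correction) is reasonable, but it is not the paper's route and it has a genuine hole at exactly the step you flag as the main obstacle. The paper's proof is a one-line appeal to Example 7 and Lemmas A.2--A.3 of \cite{belloniHighdimensionalEconometricsRegularized2018}, i.e.\ the self-normalized moderate-deviation machinery for $\max_{j,k}|\mathbb{G}_n f_{jk}|/(\mathbb{E}_n f_{jk}^2)^{1/2}$; that is what naturally produces the $1-\delta/\log^2(n)$ tail, and it is why the second hypothesis carries the $\log(JKn)/\sqrt n$ factor (it is the Lyapunov-type condition of that theory). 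Lemma C.1(4), by contrast, gives a bound whose conversion to a $1-\delta/\log^2(n)$ tail via Markov inflates the $\delta^{-1}$ term by $\log^2(n)$, and your step of ``absorbing'' that factor into the $\log(JKn)$ already present in the hypothesis is asserted rather than justified: $\delta^{-1}\log^2(n)$ is not generically dominated by $C\log(JKn)$.

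The substantive gap is the variance bridge. You need $E[f_{jk}^2]=E[\xi_j^2h_{jk}^2]\leq C\sigma^2$ and propose to get it from $\max_{j,l,k}\mathbb{E}_n(X_{jl}h_{jk}(W))^2\leq\sigma^2$ by claiming that the bounded-share condition behind Lemma \ref{lemma.Lip} makes $|\xi_j(\tilde X;\theta_0)|$ a Lipschitz-constant multiple of $\max_l|X_{jl}|$. Lemma \ref{lemma.Lip} does not say that: it bounds the \emph{derivative} of $\xi_j$ with respect to the indices $\nu_{j'g}$, hence differences $\xi_j(\cdot;\theta)-\xi_j(\cdot;\theta_0)$ across parameter values, not the level $\xi_j(\tilde X;\theta_0)$, which is the true structural unobservable and need not be dominated by the observed attributes (at best, bounded shares give something like $|\xi_j|\leq C+\|\beta_0\|_1\max_l|X_{jl}|$, and the additive constant produces an $E[h_{jk}^2]$ term the stated hypothesis does not control). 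There is also an unaddressed mismatch between the empirical moment $\mathbb{E}_n$ in the hypothesis and the population moment required as the variance input. If instead the first hypothesis is read as directly bounding the second empirical moment of the score itself (as in Example 7 of Belloni et al., where $\sigma^2$ bounds the self-normalizing denominator), no bridge is needed and the moderate-deviation route goes through; as written, your bridge does not.
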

\begin{proof}
 The analogue argument in example 7 of Belloni et al (2018) in the application of lemma A.2 and A.3 shows the result.
\end{proof}

\begin{lemma} \label{lemma.LipGrad}
Assume the assumptions for lemma \ref{lemma.Lip}. Let $\nu_{j'g}$ and $\nu_{j'g0}$ as $JG$ vectors of indices as previously defined. Then there exists a sequence $B_{j'g}(x_l)$ which depends on the intermediate value of $\theta$ and $\theta_0$ such that
\[
 \frac{d \xi_j(\tilde{x};\theta)}{d\theta_l} - \frac{d \xi_j(\tilde{x};\theta_0)}{d\theta_l} = \sum_{j' \in [J],g \in [G]} B_{j'g}(x_l)(\nu_{j'g} - \nu_{j'g0}),
\]
and
\[
 |B_{j'g}(x_l)| \leq \bar{C}J\max_{j' \in [J]}|x_{j'l}|
\]
with a universal constant $\bar{C}$ for any value of $\theta$ and $x_l$.
\end{lemma}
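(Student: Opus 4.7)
The plan is to combine a chain-rule reduction with the integral form of the mean value theorem, and to extend the implicit-differentiation bookkeeping of Lemma \ref{lemma.Lip} one derivative further. Since $\xi_j(\tilde{x};\theta)$ depends on $\theta$ only through the index vector $\nu = (\nu_{j'g})$, the chain rule gives
\[
 \frac{d\xi_j(\tilde{x};\theta)}{d\theta_l} \;=\; \sum_{j'' \in [J]} \frac{\partial \xi_j(\tilde{x};\theta)}{\partial \nu_{j'',g(l)}}\, x_{j'',l},
\]
which is smooth in $\nu$ for fixed $x_l$ under the share-boundedness hypothesis of Lemma \ref{lemma.Lip}. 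Parametrizing the segment $\nu(t) = (1-t)\nu_0 + t\,\nu$ and integrating $d/dt$ of the display above over $t \in [0,1]$ yields the stated identity with
\[
 B_{j'g}(x_l) \;\equiv\; \int_0^1 \sum_{j'' \in [J]} \frac{\partial^2 \xi_j(\tilde{x};\theta(t))}{\partial \nu_{j'g}\,\partial \nu_{j'',g(l)}}\, x_{j'',l}\, dt.
\]

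To obtain $|B_{j'g}(x_l)| \le \bar{C}\,J\,\max_{j'}|x_{j',l}|$ it suffices to show the uniform bound $\sum_{j'' \in [J]} |\partial^2 \xi_j/\partial \nu_{j'g}\,\partial \nu_{j'',g(l)}| \le \bar{C}\,J$ on the segment, after which pulling $\max_{j''}|x_{j'',l}|$ out of the sum delivers the claim. I would obtain this second-derivative bound by extending the calculation in Lemma \ref{lemma.Lip}. Starting from the implicit-function identity $\partial \xi_j/\partial \nu_{j'g} = \sum_m D_{jm}(\theta) \int \tilde{\beta}_g s_m(\tilde{\beta})(\mathbf{1}\{m=j'\} - s_{j'}(\tilde{\beta}))\, dF$, differentiate a second time in $\nu_{j'',g(l)}$: the matrix calculus rule $\partial D/\partial \nu = -D(\partial^2 s/\partial \xi\,\partial \nu)D$ together with the explicit logit formulas for the first and second $\nu$-derivatives of $s_m$ expresses the second partial as a sum of integrals whose generic term is a product of at most two $D$-entries (each bounded by $2J/c_1$ via Lemma \ref{lemma.Lip}) against polynomials in $\tilde{\beta}_g$ times triple or quadruple products of shares.

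The main obstacle is the $J$-bookkeeping. A naive pointwise bound on each second partial produces $O(J^2)$, which is one power of $J$ too large. The fix is to sum over $m$ (from the $D$-expansion) and over $j''$ \emph{before} invoking the $2J/c_1$ bound on individual $D$-entries, so that the telescoping identities $\sum_m s_m(\tilde{\beta}) \le 1 - s_0(\tilde{\beta}) \le 1$ and $\sum_{j''} s_{j''}(\tilde{\beta}) \le 1 - s_0(\tilde{\beta})$ each cancel one of the $J$-factors supplied by $D$. Applying this cancellation mechanism twice, exactly as in Lemma \ref{lemma.Lip}, leaves a single surviving factor of $J$, while the standard-Gaussian moments of $\tilde{\beta}_g$ absorb into the universal constant $\bar{C}$. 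Finally, one observes that the bound is independent of $t \in [0,1]$, so the uniform bound survives the integration defining $B_{j'g}(x_l)$.
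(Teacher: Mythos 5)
Your proposal is correct and follows essentially the same route as the paper: differentiate the implicit-function representation of $\partial\xi_j/\partial\nu_{j'g}$ once more in the index coordinates, apply the mean value theorem in $\nu$ to produce the coefficients $B_{j'g}(x_l)$, and control them by the $2J/c_1$ bound on the entries of $D^{-1}$ together with telescoping sums of shares. The paper merely organizes the same computation via an explicit product-rule split of $d_{jj''}(\theta)s_{j''}(\theta)(1-s_{j'}(\theta))$ into a term differencing $d_{jj''}$ and a term differencing $s_{j''}(1-s_{j'})$ (each handled by the Lagrange form of the mean value theorem), which is exactly the second derivative your integral-form expression unpacks into.
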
 
\begin{proof}
 Observe that
 \[
  \frac{d \xi_j(\tilde{x};\theta)}{d\theta_l} = D_{j}(\theta)\int\tilde{\beta}_{g(l)} s(\tilde{\beta};\tilde{x},\theta) \sum_{j' \in [J]}(1 - s_{j'}(\tilde{\beta};\tilde{x},\theta))x_{j'l} dF_{\tilde{\beta}}.
 \]
 
Below, we omit $\tilde{x}$ as the arguments of $s_{j}(\tilde{x};\theta)$ and $s_{j}(\tilde{\beta};\tilde{x},\theta)$ for notational simplicity.
 
Now we have
\[
 \left|\frac{d \xi_j(\tilde{x};\theta)}{d\theta_l} - \frac{d \xi_j(\tilde{x};\theta_0)}{d\theta_l}\right| = \left|\sum_{j',j'' \in [J]} \int \tilde{\beta}_{g(l)} \left[
 d_{jj''}(\theta) s_{j''}(\theta)(1 - s_{j'}(\theta)) - d_{jj''}(\theta_0) s_{j''}(\theta_0)(1 - s_{j'}(\theta_0))
 \right]
 x_{j'l} dF_{\tilde{\beta}}\right|.
\]
Thus,
\begin{align*}
   \sum_{j',j'' \in [J]} \int & \tilde{\beta}_{g(l)} \left[
 d_{jj''}(\theta) s_{j''}(\theta)(1 - s_{j'}(\theta)) - d_{jj''}(\theta_0) s_{j''}(\theta_0)(1 - s_{j'}(\theta_0))
 \right] x_{j'l} dF_{\tilde{\beta}}\\
  = \sum_{j',j'' \in [J]} \int & \tilde{\beta}_{g(l)}\left[ 
 d_{jj''} (\theta)(s_{j''}(\theta)(1 - s_{j'}(\theta)) - s_{j''}(\theta_0)(1 - s_{j'}(\theta_0)))
+ (d_{jj''}(\theta) - d_{jj''}(\theta_0))s_{j''}(\theta_0)(1 - s_{j'}(\theta_0))\right]x_{j'l} dF_{\tilde{\beta}}
\end{align*}

First consider expanding $s_{j''}(\theta)(1 - s_{j'}(\theta)) - s_{j''}(\theta_0)(1 - s_{j'}(\theta_0))$ with respect to $\nu_{jg} - \nu_{jg0}$. We have
\[
 \frac{d s_{j''}(\theta)(1 - s_{j'}(\theta))}{d\nu_{jg}} = \int \tilde{\beta}_g s_{j''}(\tilde{\beta};\theta)s_{j'}(\tilde{\beta};\theta)(1 - 2s_j(\tilde{\beta};\theta)) dF_{\tilde{\beta}}.
\]
Therefore, by the mean value theorem, there exists an intermediate value vector $\tilde{\theta}$,
\[
 s_{j''}(\theta)(1 - s_{j'}(\theta)) - s_{j''}(\theta_0)(1 - s_{j'}(\theta_0)) = \sum_{j=1}^J \sum_{g=0}^{G} \int \tilde{\beta}_g s_{j''}(\tilde{\beta};\tilde{\theta})s_{j'}(\tilde{\beta};\tilde{\theta})(1 - 2s_j(\tilde{\beta};\tilde{\theta})) dF_{\tilde{\beta}} (\nu_{jg} - \nu_{jg0}).
\]

Next consider expanding $d_{j'j''}(\theta) - d_{j'j''}(\theta_0)$ with respect to $\nu_{jg} - \nu_{jg0}$. Observe that
\begin{align*}
 \frac{dD^{-1}(\theta)}{d\nu_{jg}} =& -D^{-1}(\theta) \left[\frac{d}{d\nu_{jg}} \frac{ds_{j_1}(\theta)}{d\xi_{j_2}} \right]_{j_1,j_2} D^{-1}(\theta) \\
 =& -D^{-1}(\theta) \left[\frac{d}{d\nu_{jg}} s_{j_1}(\theta)(1 - s_{j_2}(\theta)) \right]_{j_1,j_2} D^{-1}(\theta) \\
 =& -D^{-1}(\theta) \left[\int \tilde{\beta}_g s_{j_1}(\tilde{\beta};\theta)s_{j_2}(\tilde{\beta};\theta)(1 - 2s_{j}(\tilde{\beta};\theta)) dF_{\tilde{\beta}}\right]_{j_1,j_2} D^{-1}(\theta)\\
 =& -\left[\sum_{j_1 \in [J]}\sum_{j_2 \in [J]} d_{j' j_2}(\theta)d_{j_1j''}(\theta) \int \tilde{\beta}_g s_{j_1}(\tilde{\beta};\theta)s_{j_2}(\tilde{\beta};\theta)(1 - 2s_{j}(\tilde{\beta};\theta)) dF_{\tilde{\beta}} \right]_{j',j''}.
\end{align*}
Therefore, by the mean value theorem
\begin{align*}
 d_{j'j''}(\theta) - d_{j'j''}(\theta_0) =& \sum_{j,j_1,j_2 \in [J]}
\sum_{g=0}^{G} d_{j'j_2}(\tilde{\theta}) d_{j_1 j''}(\tilde{\theta})\int \tilde{\beta}_g s_{j_1}(\tilde{\beta};\tilde{\theta})s_{j_2}(\tilde{\beta};\tilde{\theta})(1 - 2s_{j}(\tilde{\beta};\tilde{\theta})) dF_{\tilde{\beta}} (\nu_{jg} - \nu_{jg0}).
\end{align*}

Combining two results, we have
\begin{align*}
 \frac{d \xi_j(\tilde{x};\theta)}{d\theta_l} - \frac{d \xi_j(\tilde{x};\theta_0)}{d\theta_l}
 = \sum_{\tilde{j}=1}^J \sum_{g=0}^G B_{\tilde{j}g}(x_l) (\nu_{\tilde{j}g} - \nu_{\tilde{j}g0})
\end{align*}
where
\begin{align*}
 &B_{\tilde{j}g}(x_l) \equiv  
 \sum_{j', j'' \in [J]} \int \tilde{\beta}_{g(l)} d_{jj''}(\tilde{\theta}) \tilde{\beta}_{g} s_{j''}(\tilde{\beta};\tilde{\theta})s_{j'}(\tilde{\beta};\tilde{\theta})(1 - 2s_{\tilde{j}}(\tilde{\beta};\tilde{\theta})) dF_{\tilde{\beta}} x_{j'l}\\
 &+ \sum_{j',j'' \in [J]} \int \tilde{\beta}_{g(l)}\sum_{j_1, j_2 \in [J]} d_{j'j_2}(\tilde{\theta}) d_{j_1 j''}(\tilde{\theta})\tilde{\beta}_{g} s_{j_1}(\tilde{\beta};\tilde{\theta})s_{j_2}(\tilde{\beta};\tilde{\theta})(1 - 2s_{\tilde{j}}(\tilde{\beta};\tilde{\theta})) s_{j''}(\theta_0)(1 - s_{j'}(\theta_0)) dF_{\tilde{\beta}} x_{j'l}.
\end{align*}

For the second claim, observe that the absolute value of the first term of $B_{j'g}(x_l)$ is bounded above by
\[
 \frac{J}{C_1} \max_{j' \in [J]}|x_{j'l}| \int |\tilde{\beta}_{g(l)} \tilde{\beta}_{g}| \left|\sum_{j', j'' \in [J]} s_{j''}(\tilde{\beta};\tilde{\theta}) s_{j'}(\tilde{\beta};\tilde{\theta})  (1 - 2s_j(\tilde{\beta};\tilde{\theta}) )\right| dF_{\tilde{\beta}} \leq C_2 J \max_{j'}|x_{j'l}|.
\]
because the crude bound of $0 < s_j(\tilde{\beta};\tilde{\theta}) < 1$ says that
\begin{align*}
 &\left|\sum_{j', j'' \in [J]} s_{j''}(\tilde{\beta};\tilde{\theta}) s_{j'}(\tilde{\beta};\tilde{\theta}) (1 - 2s_j(\tilde{\beta};\tilde{\theta}))\right| \\
 &\leq |(1 - s_0(\tilde{\beta};\tilde{\theta}) - s_j(\tilde{\beta};\tilde{\theta}))^2 (1 - 2s_j(\tilde{\beta};\tilde{\theta})) + (1 - s_0(\tilde{\beta};\tilde{\theta}) - s_j(\tilde{\beta};\tilde{\theta}))(s_j(\tilde{\beta};\tilde{\theta})(1 - 2s_j(\tilde{\beta};\tilde{\theta})) + s_j(\tilde{\beta};\tilde{\theta})^2(1 - 2s_j(\tilde{\beta};\tilde{\theta}))|\\
 &\leq |1 - 2s_j(\tilde{\beta};\tilde{\theta})||1 + s_j(\tilde{\beta};\tilde{\theta}) + s_j(\tilde{\beta};\tilde{\theta})^2| \leq 3.    
\end{align*}

Similarly, the absolute value of the second term is bounded above by 
\begin{align*}
 &\frac{J^2}{C_3} \max_{j' \in [J]}|x_{j'l}| \int |\tilde{\beta}_{g(l)} \tilde{\beta}_{g}| \left|\sum_{j',j'',j_1,j_2 \in [J]}s_{j_1}(\tilde{\beta};\tilde{\theta})s_{j_2}(\tilde{\beta};\tilde{\theta})(1 - 2s_j(\tilde{\beta};\tilde{\theta}))s_{j''}(\tilde{\beta};\tilde{\theta})(1 - s_{j'}(\tilde{\beta};\tilde{\theta}))\right|dF_{\tilde{\beta}} \\
 &\leq \frac{J^2}{C_3} \max_{j' \in [J]}|x_{j'l}| \int |\tilde{\beta}_{g(l)} \tilde{\beta}_{g}| \left|s_0(\tilde{\beta};\tilde{\theta}) \sum_{j'',j_1,j_2 \in [J]}s_{j_1}(\tilde{\beta};\tilde{\theta})s_{j_2}(\tilde{\beta};\tilde{\theta})(1 - 2s_j(\tilde{\beta};\tilde{\theta}))s_{j''}(\tilde{\beta};\tilde{\theta})\right|dF_{\tilde{\beta}} \leq C_4 J \max_{j' \in [J]}|x_{j'l}|.
 \end{align*}
since $|d_{j_1 j_2}(\theta)d_{j_3 j_4}(\theta)| \leq |d_{j_1 j_2}(\theta)||d_{j_3 j_4}(\theta)|$. Therefore, the statement claimed follows for a constant $\bar{C} \geq \max\{C_2,C_4\}$.

\end{proof}

\begin{corollary} \label{crr.contraction}
 In addition to the assumptions for lemma \ref{lemma.Lip}, suppose that
 \begin{enumerate}
     \item 
        $\sup_{\Delta \theta \in \Theta, j,j' \in [J],k,k' \in [K]} \mathbb{E}_n Var((f_{jk}(\tilde{X};\theta_0 + \Delta \theta) - f_{jk}(\tilde{X};\theta_0)) f_{j'k'}(\tilde{X};\theta_0)) \leq B_{1n}^2$, and
    \item $\max_{j,j' \in [J], l \in [L], k,k' \in [K]} \mathbb{E}_n (X_{jl} h_{jk}(W)h_{j'k'}(W)\xi_{j}(\tilde{X};\theta_0))^2 \leq B_{2n}^2$ with probability at least $1 - \delta_n/6$,
 \end{enumerate}
 then,
 \begin{align*}
  \sup_{\theta \in \mathcal{R}(\theta_0), j,j' \in [J],k,k' \in [K]}& |\mathbb{G}_n (f_{j,k}(\tilde{X};\theta) - f_{j,k}(\tilde{X};\theta_0))f_{j',k'}(\tilde{X};\theta_0)|\\
  &\leq n^{-1/2} C(B_{1n} + (J^2G)(2\sqrt{2}B_{2n}\|\theta - \theta_0 \|_1 \log^{1/2}(8J^2 GKL/\delta_n)   
 \end{align*}
 with probability at least $1 - \delta_n$
 with a universal constant $C$.
  
\end{corollary}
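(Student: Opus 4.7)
The plan is to re-run the argument of Theorem \ref{thm.tailbound} with the $\theta$-independent factor $f_{j'k'}(\tilde X;\theta_0) = \xi_{j'}(\tilde X;\theta_0) h_{j'k'}(W)$ folded into the instrument. Because $f_{j'k'}(\tilde X;\theta_0)$ does not depend on $\theta$, the index-wise Lipschitz continuity of $\xi_j$ established in Lemma \ref{lemma.Lip} is preserved when we multiply through by it, with Lipschitz constant rescaled from $C_1 J |h_{jk}(W_i)|$ to $C_1 J |h_{jk}(W_i) h_{j'k'}(W_i) \xi_{j'}(\tilde X_i;\theta_0)|$. This is exactly the quantity whose empirical second moment is capped by $B_{2n}^2$ in the second hypothesis, so the whole proof rides on the same rails.

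First I would apply Chebyshev's inequality (using the variance bound $B_{1n}^2$ from condition 1) and the symmetrization lemma (Lemma 2.3.7 of van der Vaart–Wellner), reducing the tail of the empirical process to a Rademacher process, all for $t^2 \geq 16 B_{1n}^2$. Conditional on the event that $B_{2n}^2$ controls the stated empirical fourth moment (probability $\geq 1 - \delta_n/6$) and on the data $\tilde X$, I would then apply the exponential Markov inequality with the same tuning $\phi = t/(16 J^2 G B_{2n}^2 \sup_\theta \|\theta\|_1^2)$ used in Theorem \ref{thm.tailbound}, so that the whole problem reduces to controlling $E_\sigma \exp(\phi \max \sup |\mathbb G_n \sigma (f_{jk}(\tilde X;\theta) - f_{jk}(\tilde X;\theta_0)) f_{j'k'}(\tilde X;\theta_0)|)$.

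Next, the mean value theorem gives $f_{jk}(\tilde X;\theta) - f_{jk}(\tilde X;\theta_0) = \sum_{g=0}^{G} \sum_{j'' \in [J]} \frac{d\xi_j(\tilde\nu;\tilde X)}{d\nu_{j''g}}(\nu_{j''g} - \nu_{j''g0}) h_{jk}(W)$, so multiplying by $f_{j'k'}(\tilde X;\theta_0)$ and pulling the $(g,j'')$ sum outside the supremum with the triangle inequality, followed by union bounds over $(j, j', j'', g, k, k')$, reduces matters to a single coordinate Rademacher average. Applying Ledoux–Talagrand contraction (Corollary \ref{crr.LT}) to the $\sigma$-fixed functions
\[
 \psi_i(\Delta\nu_{ij''g}) \equiv \frac{d\xi_j(\tilde\nu_i;\tilde X_i)\,\Delta\nu_{j''g}}{d\nu_{j''g}}\, h_{jk}(W_i)\, h_{j'k'}(W_i)\, \xi_{j'}(\tilde X_i;\theta_0),
\]
which vanish at $0$ and have Lipschitz constant $C_1 J |h_{jk}(W_i) h_{j'k'}(W_i) \xi_{j'}(\tilde X_i;\theta_0)|$ by Lemma \ref{lemma.Lip}, linearizes the process in $\Delta\nu_{j''g}$. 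A Hölder step then replaces $\sup_{\Delta\nu}$ by $\sup_\theta \|\theta\|_1$ times $\max_{l \in [L]} |\mathbb G_n \sigma h_{jk}(W) h_{j'k'}(W) \xi_{j'}(\tilde X;\theta_0) X_{j'',l}|$.

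Finally, sub-Gaussianity of a Rademacher-weighted sum bounded by the empirical second moment $B_{2n}^2$ supplies the exponential moment $2 \exp\bigl(2 C_1^2 J^4 G^2 \phi^2 B_{2n}^2 \sup_\theta \|\theta\|_1^2\bigr)$, and choosing $t$ as in Lemma D.3 of \cite{belloniHighdimensionalEconometricsRegularized2018} gives the stated $\tilde l_n$-style rate, with the enlarged union-bound cardinality $J^2 GKL$ entering the logarithm (the constant $8$ absorbs the symmetrization factor $4$ together with the extra factor from the added $k'$ index). The one place requiring care is the contraction step: $\xi_{j'}(\tilde X;\theta_0)$ is random but, crucially, $\theta$-free, so conditioning on $\tilde X$ makes it a deterministic rescaling of the Lipschitz constant, and this is what allows Ledoux–Talagrand to apply verbatim. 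Matching conditions (1) and (2) of the corollary to the same Chebyshev/contraction roles played by $B_{1n}$ and $B_{2n}$ in Theorem \ref{thm.tailbound} completes the argument.
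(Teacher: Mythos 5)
Your proposal is correct and follows exactly the route the paper takes: the paper's own proof is the one-line observation that all arguments of Theorem \ref{thm.tailbound} go through after replacing $h_{jk}(W_i)$ by $h_{jk}(W_i)h_{j'k'}(W_i)\xi_{j'}(\tilde{X}_i;\theta_0)$, and your sketch simply spells out why that substitution works (the extra factor is $\theta$-free, so conditional on $\tilde{X}$ it only rescales the Lipschitz constant fed to the contraction inequality, with condition 2 supplying the matching second-moment bound). The only quibble is your passing reference to condition 2 as a ``fourth moment'' bound, which you later correctly call an empirical second moment.
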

\begin{proof}
 All the arguments in the proof of theorem 1 applies by replacing $h_{jk}(W_i)$ terms with $h_{jk}(W_i)h_{j'k'}(W_i)\xi_{j'k'}(\tilde{X};\theta_0)$. 
\end{proof}
 
\begin{corollary} \label{crr.contraction2}
 In addition to the assumptions for lemma \ref{lemma.LipGrad}, suppose that
 \begin{enumerate}
     \item 
        $\sup_{\Delta \theta \in \Theta, j \in [J],k \in [K], l \in [2L]} \mathbb{E}_n Var(G_{jk,l}(\tilde{X};\theta_0 + \Delta \theta) - G_{jk,l}(\tilde{X};\theta_0))\leq B_{1n}^2$, and
    \item $\max_{j \in [J], l \in [L], k \in [K]} \mathbb{E}_n (h_{jk}(W) X_{jl} \max_{j' \in [J]}|X_{j'l}| )^2 \leq B_{2n}^2$ with probability at least $1 - \delta_n/6$,
 \end{enumerate}
 then,
 \[
  \sup_{j,j' \in [J],k,k' \in [K], l \in [2L]} |\mathbb{G}_n (G_{jk,l}(\tilde{X};\hat{\theta}) - G_{jk,l}(\tilde{X};\theta_0))\|_{\infty} \leq n^{-1/2} C(B_{1n} + (J^2G)(2\sqrt{2}B_{2n}\|\hat{\theta} - \theta_0 \|_1 \log^{1/2}(8J^2 GKL/\delta_n)))
 \]
 with probability at least $1 - \delta_n$ with a universal constant $C$.
  
\end{corollary}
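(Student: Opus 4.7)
The plan is to mirror the proof of Theorem \ref{thm.tailbound} essentially verbatim, substituting the Jacobian component $G_{jk,l}$ for the score component $f_{jk}$ and invoking lemma \ref{lemma.LipGrad} in place of lemma \ref{lemma.Lip}. The only structural change is that we now carry an additional index $l \in [2L]$, which just adds a $\log(L)$-type factor to the final union bound.

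First I would condition on the event in item 2 together with $\tilde{X}$, apply the symmetrization lemma to pass from the centered empirical process to a Rademacher-weighted process, and then invoke Markov's inequality on the exponential moment with the same tuning $\phi = t/(16 J^2 G B_{2n}^2 \sup_\theta \|\theta\|_1^2)$ as in Theorem \ref{thm.tailbound}. This reduces the task to controlling
\[
E_\sigma \exp\left(\phi \max_{j,k,l} \sup_{\Delta\theta} \left|\mathbb{G}_n \sigma \bigl(G_{jk,l}(\tilde{X};\theta_0+\Delta\theta) - G_{jk,l}(\tilde{X};\theta_0)\bigr)\right|\right).
\]

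Next, I would use lemma \ref{lemma.LipGrad} to expand the difference as $h_{jk}(W) \sum_{g,j'} B_{j'g}(X_l)(\nu_{j'g} - \nu_{j'g0})$, apply the triangle inequality together with the trivial domination of a sum by $JG$ times its maximum to separate the exponent over $(j',g)$, and absorb the union bounds over $(j,j',k,l,g)$ into a prefactor of $J^2 G K L$. Within each fixed tuple, the map $\Delta\nu_{ij'g}\mapsto h_{jk}(W_i) B_{j'g}(X_{i,l})\Delta\nu_{ij'g}$ vanishes at zero and is Lipschitz in $\Delta\nu_{ij'g}$ with constant $h_{jk}(W_i)\,\bar{C} J \max_{j''}|X_{j'',l}|$ by the second claim of lemma \ref{lemma.LipGrad}; the Ledoux--Talagrand contraction inequality then replaces it by the linear chaos $\bar{C}J\, h_{jk}(W)\max_{j''}|X_{j'',l}|\,\Delta\nu_{j'g}$.

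Finally, writing $\Delta\nu_{j'g} = X_{j',g}'(\theta_g - \theta_{g0})$ and applying H\"older's inequality followed by a union bound over $l' \in [L]$ reduces the quantity to $\sup_\theta \|\theta\|_1$ times $\max_{l'}\left|\mathbb{G}_n \sigma\, h_{jk}(W) \max_{j''}|X_{j'',l}|\, X_{j',l'}\right|$; the sub-Gaussian exponential-moment bound for a Rademacher sum of bounded terms, combined with condition 2 on $B_{2n}^2$, controls the result, and pulling the factor $J^2 GKL$ into the logarithm before inverting Markov produces the stated rate. The only real obstacle is bookkeeping: one must verify that the extra $\max_{j'}|X_{j',l}|$ factor coming from the Lipschitz constant of lemma \ref{lemma.LipGrad} is precisely what the second-moment hypothesis $B_{2n}^2$ of item 2 is designed to absorb, and that the additional union bound over $l \in [2L]$ enters only through the $\log$, leaving the polynomial dependence on $J$ and $G$ identical to Theorem \ref{thm.tailbound}.
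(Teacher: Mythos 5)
Your proposal is correct and follows essentially the same route as the paper: the paper's own proof simply states that the argument of Theorem \ref{thm.tailbound} goes through after replacing $f_{jk}$ with $G_{jk,l}$ and $h_{jk}(W_i)$ with $h_{jk}(W_i)\max_{j'\in[J]}|X_{j',l}|$, using the linear expansion and Lipschitz bound of lemma \ref{lemma.LipGrad} in place of lemma \ref{lemma.Lip}. You have merely spelled out the substitution in more detail, correctly identifying that the extra $\max_{j'}|X_{j',l}|$ factor in the Lipschitz constant is absorbed by the modified second-moment hypothesis on $B_{2n}$ and that the extra index $l$ enters only through the logarithmic union bound.
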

\begin{proof}
 By lemma \ref{lemma.LipGrad}, the gradient functions $G_{jk,l}(\tilde{X};\theta)$ can be linearly expanded with respect to $\nu_{jg} - \nu_{jg0}$ indices and their coefficients depend on $l$ only through the corresponding sub-vector of covariates $X_l$. Therefore, all the arguments in the proof of theorem 1 applies by replacing
 $f_{jk}(\tilde{X};\theta)$ terms with $G_{jk,l}(\tilde{X};\theta)$ and
 $h_{jk}(W_i)$ terms with $h_{jk}(W_i)\max_{j' \in [J]} |X_{j',l}|$. 
\end{proof}

\begin{corollary}(Based on \citet[theorem 4.12]{ledouxProbabilityBanachSpaces1991}) \label{crr.LT}
 Let $F:\mathbb{R}_+ \rightarrow \mathbb{R}_+$ be convex and increasing. Let $\mathcal{N}$ be a subset of $\mathbb{R}^{nJ}$ and $\mathcal{N}_i, \mathcal{N}_j$, and $\mathcal{N}_{ij}$ for each $i \in [n]$ and $j \in [J]$ be $i,j, (ij)$-th coordinates of $\mathcal{N}$. Let $\sigma = \{\sigma_i\}_{i \in n}$ be independent Rademacher random variables taking $\{-1,1\}$ with equal probability. Let $\phi_i : \mathcal{N}_i \rightarrow \mathbb{R}$ be functions such that $|\phi_i(\nu_i)| \leq 1$ and $|\phi_i(\nu_i)\nu_{ij} - \phi_i(\nu_i')\nu'_{ij}| \leq |\nu_{ij} - \nu'_{ij}|$
 uniformly over $\nu_i \in \mathcal{N}_i$, $\nu_{ij},\nu'_{ij} \in \mathcal{N}_{ij}$ for every $i \in [n]$ and $j \in [J]$. Then
 \[
 E\left[F\left(\frac{1}{2} \sup_{\nu \in \mathcal{N}} \left|\sum_{i=1}^n \sigma_i \phi_i(\nu_i)\nu_{ij}\right| \right) \right] \leq  E\left[F\left(\sup_{\nu_j \in \mathcal{N}_j} \left|\sum_{i=1}^n \sigma_i \nu_{ij}\right| \right) \right].
 \]
\end{corollary}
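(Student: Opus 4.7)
My plan is to apply Theorem 4.12 of \cite{ledouxProbabilityBanachSpaces1991}, the classical Rademacher contraction principle, to the single-variable mappings extracted from the product form $\phi_i(\nu_i)\,\nu_{ij}$. Theorem 4.12 states that for any bounded $T\subset\mathbb{R}^n$ and any contractions $\varphi_i:\mathbb{R}\to\mathbb{R}$ with $\varphi_i(0)=0$,
\[
E\,F\!\bigl(\tfrac{1}{2}\sup_{t\in T}\bigl|\sum_i \sigma_i\,\varphi_i(t_i)\bigr|\bigr) \,\le\, E\,F\!\bigl(\sup_{t\in T}\bigl|\sum_i \sigma_i\,t_i\bigr|\bigr).
\]

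The first step is to identify the contractions. For each $i$ and each fixed off-$j$ configuration $\nu_i^{(-j)}$ of $\nu_i$, the map $t\mapsto\phi_i(\nu_i^{(-j)},t)\,t$ is $1$-Lipschitz by the second hypothesis of the corollary (applied with $\nu_i'$ equal to $\nu_i$ except in the $j$-th coordinate) and vanishes at $t=0$ because it is a product containing $t$; the uniform bound $|\phi_i|\le 1$ ensures these maps are genuine contractions in the sense of the cited theorem. The second step is to take $T=\mathcal{N}_j\subset\mathbb{R}^n$, the projection of $\mathcal{N}$ onto its $j$-th column, which is bounded by hypothesis on $\mathcal{N}$; the right-hand side of the Ledoux--Talagrand bound then immediately matches the right-hand side of the corollary, namely $E\,F(\sup_{\nu_j\in\mathcal{N}_j}|\sum_i \sigma_i\,\nu_{ij}|)$.

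The main obstacle is the third step: matching the left-hand sides. The corollary's LHS carries an outer supremum over $\nu\in\mathcal{N}$, and varying $\nu^{(-j)}$ changes both $\nu_{ij}$ and the contraction $\phi_i(\nu_i)$ itself, whereas Theorem 4.12 as usually stated fixes the contractions and only varies $t$. To bridge this gap, I would invoke the natural parametric extension of Theorem 4.12 in which the contractions are replaced by a family indexed by an auxiliary parameter $s$: given maps $\varphi_i(s;\cdot):\mathbb{R}\to\mathbb{R}$ uniformly $1$-Lipschitz in the second argument and satisfying $\varphi_i(s;0)=0$ for every $s$, the same bound holds with the outer supremum taken over the joint parameter $(s,t)$ on both sides. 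This extension follows by re-running the Ledoux--Talagrand induction on coordinates verbatim: when reducing the $n$-th coordinate by conditioning on $\sigma_1,\ldots,\sigma_{n-1}$ and averaging over $\sigma_n\in\{\pm 1\}$, the one-step key inequality uses only the uniform $1$-Lipschitz property and vanishing at $0$, both of which hold uniformly in $s$, and the pair $(s,t)$ plays the role of the joint index over which the supremum is taken. Setting $s\equiv\nu^{(-j)}$ and $t_i\equiv\nu_{ij}$ in the extended theorem then yields the corollary's inequality directly, with the factor of $\tfrac{1}{2}$ and the matching outer suprema over $\mathcal{N}$ and $\mathcal{N}_j$ precisely as stated.
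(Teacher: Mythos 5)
Your overall strategy --- re-running the Ledoux--Talagrand coordinate-by-coordinate induction with contractions that depend on an auxiliary parameter --- is the same as the paper's, but the parametric extension you invoke is mis-stated in a way that makes it false, and the flaw sits exactly at the step you claim goes through ``verbatim.'' In the one-step inequality of the induction one compares the (near-)maximizers of the conditional supremum for $\sigma_1=+1$ and for $\sigma_1=-1$; these are in general two \emph{different} points of the index set, hence carry two different parameter values $s$ and $s'$. The step therefore needs the cross-parameter bound $|\varphi_i(s;u)-\varphi_i(s';v)|\le|u-v|$ for arbitrary $s,s',u,v$, not merely that $\varphi_i(s;\cdot)$ be $1$-Lipschitz and vanish at $0$ for each fixed $s$. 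Under the hypothesis as you state it the extension fails: take $\varphi_i(s;t)=s\,t$ with $s\in\{-1,1\}$, let the joint index set be $\{-1,1\}^n\times\{(1,\ldots,1)\}$ and $F(x)=x$; each $\varphi_i(s;\cdot)$ is a contraction vanishing at $0$, yet the left side equals $n/2$ (choose $s_i=\sigma_i$) while the right side is $E\left|\sum_i\sigma_i\right|=O(\sqrt{n})$.

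The gap is repairable because the corollary's hypothesis is precisely the cross-parameter condition: $|\phi_i(\nu_i)\nu_{ij}-\phi_i(\nu_i')\nu_{ij}'|\le|\nu_{ij}-\nu_{ij}'|$ is assumed for \emph{arbitrary} pairs $\nu_i,\nu_i'\in\mathcal{N}_i$, not only for pairs agreeing off the $j$-th coordinate (your Step 1 extracts only that weaker restriction). This is exactly what the paper's proof exploits: its key two-point inequality $\tfrac12 G(s_{1j}-\phi(s_2)s_{2j})+\tfrac12 G(t_{1j}+\phi(t_2)t_{2j})\le\tfrac12 G(s_{1j}-s_{2j})+\tfrac12 G(t_{1j}+t_{2j})$ compares $\phi$ at the two distinct points $s_2$ and $t_2$ and runs the sign case analysis using $|\phi(t_2)t_{2j}-\phi(s_2)s_{2j}|\le|t_{2j}-s_{2j}|$ together with $|\phi|\le1$. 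If you restate your parametric extension with the cross-parameter Lipschitz hypothesis and verify that two-point inequality under it, your argument becomes essentially the paper's proof.
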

\begin{proof}
 The result follows from the proof of the \cite{ledouxProbabilityBanachSpaces1991}, Theorem 4.12. Below, we state a modified sketch of the original proof. First, we want to show that
 \[
 E\left[G\left( \sup_{\nu \in \mathcal{N}} \sum_{i=1}^n \sigma_i \phi_i(\nu_i)\nu_{ij}\right) \right] \leq  E\left[G\left(\sup_{\nu_j \in \mathcal{N}_j} \sum_{i=1}^n \sigma_i \nu_{ij} \right) \right]
 \]
 for convex and increasing $G:\mathbb{R} \rightarrow \mathbb{R}$. Once the above inequality holds, we would achieve the stated inequality by the symmetry of the distribution of the random variables multiplied with Rademacher variables.
 
 We show the above inequality by conditioning and iteration. 
 Let $\sigma_{i > j} \equiv \{\sigma_j,\ldots, \sigma_{n}\}$. Now, order the $2^{n-j}$ support values of $\sigma_{i > j}$. Let $\sigma_{i>j}^r$ be a $r$th value in the ordered support values of $\sigma_{i > j}$. As the Rademacher variables are independent, 
 \begin{align*}
 E\left[G\left( \sup_{\nu \in \mathcal{N}} \sum_{i=1}^n \sigma_i \phi_i(\nu_i)\nu_{ij} \right) \right] = & \sum_{r=1}^{2^{n-1}} E\left[G\left( \sup_{\nu \in \mathcal{N}} \sigma_1 \phi_1(\nu_1)\nu_{1j} + \sum_{i>1}^n \sigma_i^r \phi_i(\nu_i)\nu_{ij} \right) \middle|\sigma_{i > 1}^r \right]\left(\frac{1}{2}\right)^{n-1}\\
 = & \sum_{r=1}^{2^{n-1}} E\left[G\left( \sup_{\nu \in \mathcal{N}} \sigma_1 \phi_1(\nu_1)\nu_{1j} + \sum_{i>1}^n \sigma_i^r \phi_i(\nu_i)\nu_{ij} \right) \right]\left(\frac{1}{2}\right)^{n-1}.
 \end{align*}
 If
 \[
  E\left[G\left( \sup_{\nu_1 \in \mathcal{N}_1, t \in \mathbb{R}} \sigma_1 \phi(\nu_{1})\nu_{1j} + t\right)\right] \leq E\left[G\left( \sup_{\nu_{1j} \in \mathcal{N}_{1j}, t \in \mathbb{R}} \sigma_1 \nu_{1j} + t\right)\right]
 \]
 then we have
 \begin{align*}
  &\sum_{r=1}^{2^{n-1}} E\left[G\left( \sup_{\nu \in \mathcal{N}} \sigma_1 \phi_1(\nu_1)\nu_{1j} + \sum_{i>1}^n \sigma_i^r \phi_i(\nu_i)\nu_{ij} \right) \right]\left(\frac{1}{2}\right)^{n-1}\\
  &\leq \sum_{r=1}^{2^{n-1}} E\left[G\left( \sup_{\nu_{1j} \in \mathcal{N}_{1j}, \nu_{-1} \in \mathcal{N}_{-1}} \sigma_1 \nu_{1j} + \sum_{i>1}^n \sigma_i^r \phi_i(\nu_i)\nu_{ij} \right) \right]\left(\frac{1}{2}\right)^{n-1},     
 \end{align*}
 therefore, we achieve the target inequality by iterating over $r > 1$.
 
Now we show for all $t_1,s_1 \in \mathcal{N}_1$ and $t_2,s_2 \in \mathcal{N}_2$, 
 \[
  \frac{1}{2}G\left(s_{1j} - \phi(s_2)s_{2j} \right) + \frac{1}{2}G\left(t_{1j} + \phi(t_2)t_{2j}\right) \leq 
  \frac{1}{2}G\left(s_{1j} -s_{2j} \right) + \frac{1}{2}G\left(t_{1j} + t_{2j}\right).
 \]
 The remaining argument follows essentially the same argument of the proof of \cite{ledouxProbabilityBanachSpaces1991} but the fact that $\phi(s)$ takes a vector argument. Nevertheless, a similar argument applies because it is uniformly bounded by constant.
 First, we may assume that
 \[
  t_{1j} + \phi(t_2)t_{2j} \geq s_{1j} + \phi(s_2)s_{2j}
 \]
 and
 \[
  s_{1j} - \phi(s_2)s_{2j} \geq t_{1j} - \phi(t_2)t_{2j}
 \]
 otherwise the two separate supremum under $\sigma = 1$ and $\sigma = -1$ is solved as a single supremum under common variables either $(t_1,t_2)$ or $(s_1,s_2)$ only.
 We distinguish between the following cases. When $t_{2j} \geq s_{2j} \geq 0$, we have
 \begin{align*}
  t_{1j} + \phi(t_2) t_{2j} - s_{1j} + s_{2j}  \geq & s_{1j} + \phi(s_2)s_{2j} - s_{1j} + s_{2j}\\
  \geq & s_{2j} - |\phi(s_2)|s_{2j}\\
  = & (-|\phi(s_2)| + 1)s_{2j} \geq 0,     
 \end{align*}
 and
 \[
  s_{2j} - \phi(s_2)s_{2j} \leq t_{2j} - \phi(t_2)t_{2j}
 \]
 from $|\phi(t_2)t_{2j} - \phi(s_2)s_{2j}| \leq |t_{2j} - s_{2j}|$ and $t_{2j} \geq s_{2j}$. Therefore, we have
 \begin{align*}
  G(s_{1j}-\phi(s_2) s_{2j}) - G(s_{1j}-s_{2j}) \leq &G(s_{1j}-s_{2j} + (1 - \phi(s_2)) s_{2j}) - G(s_{1j}-s_{2j})\\
  \leq &G(t_{1j} + \phi(t_2) t_{2j}  + (1 - \phi(s_2)) s_{2j}) - G(t_{1j} + \phi(t_2) t_{2j})\\
  \leq &G(t_{1j} + s_{2j} - \phi(s_2) s_{2j} + \phi(t_2) t_{2j}) - G(t_{1j} + \phi(t_2) t_{2j})\\  
  \leq &G(t_{1j} + t_{2j} - \phi(t_2) t_{2j} + \phi(t_2) t_{2j}) - G(t_{1j} + \phi(t_2) t_{2j})\\    
  \leq &G(t_{1j} + t_{2j}) - G(t_{1j} + \phi(t_2) t_{2j})
 \end{align*}
  as $G(\cdot + x) - G(\cdot)$ is increasing for any $x \geq 0$. Thus, the desired inequality is achieved.
  
  The same argument applies with $t$ replaced with $s$ and $\phi$ into $-\phi$. The parallel argument holds when $t_{2j} \leq s_{2j} \leq 0$.
  
  When $t_{2j} \geq 0$ and $s_{2j} \leq 0$, 
  \[
   G(t_{1j} + \phi(t_2) t_{j2}) - G(t_{1j} + t_{2j}) \leq G(t_{1j} + |\phi(t_2)| t_{2j}) - G(t_{1j} + t_{2j}) \leq 0
  \]    
  and
  \[
   G(s_{1j}-\phi(s_2) s_{2j}) - G(s_{1j}-s_{2j}) \leq G(s_{1j} -|\phi(s_2)| s_{2j}) - G(s_{1j}-s_{2j}) \leq 0.
  \]    
  The parallel argument applies when $t_{2j} \leq 0$ and $s_{2j} \geq 0$.
\end{proof}

\end{appendix}

\end{document}